\documentclass[runningheads]{llncs}

\usepackage{graphicx}

\usepackage[dvipsnames]{xcolor}
\usepackage{mathtools}
\usepackage{tabularray}
\usepackage{comment}
\UseTblrLibrary{booktabs}
\usepackage{makecell}
\usepackage{arydshln}
\usepackage{svg}
\usepackage{orcidlink}
\usepackage{marvosym}
\usepackage{amsmath,amssymb}
\usepackage{tikz}
\usepackage{hyperref}

\usepackage{subcaption} 
\captionsetup[figure]{labelfont=bf, textfont=normalfont}
\captionsetup[table]{labelfont=bf, textfont=normalfont}
\usepackage[strings]{underscore}	 
\usepackage{pifont}
\usepackage{cleveref}
\usepackage{todonotes}
\usepackage{xspace}
\usepackage{tikz}
\usepackage{ltl}
\usetikzlibrary{arrows,decorations.pathmorphing,automata,positioning,hobby,patterns,patterns.meta}
\usepackage{wrapfig}

\usetikzlibrary{calc}


\makeatletter
\def\moverlay{\mathpalette\mov@rlay}
\def\mov@rlay#1#2{\leavevmode\vtop{%
		\baselineskip\z@skip \lineskiplimit-\maxdimen
		\ialign{\hfil$\m@th#1##$\hfil\cr#2\crcr}}}
\newcommand{\charfusion}[3][\mathord]{
	#1{\ifx#1\mathop\vphantom{#2}\fi
		\mathpalette\mov@rlay{#2\cr#3}
	}
	\ifx#1\mathop\expandafter\displaylimits\fi}
\makeatother

\newcommand{\cupdot}{\charfusion[\mathbin]{\cup}{\cdot}}

\newcommand{\AP}{\mathit{AP}}
\newcommand{\Vap}{\mathit{V}}
\newcommand{\Wap}{\mathit{W}}


\newcommand{\Cause}{\mathsf{C}}
\newcommand{\Effect}{\mathsf{E}}


\newcommand{\In}{\mathit{I}}
\newcommand{\Out}{\mathit{O}}

\newcommand{\true}[0]{\mathit{true}}
\newcommand{\false}[0]{\mathit{false}}

\newcommand{\ldot}{\mathpunct{.}}

\newcommand{\Lang}{\mathcal{L}}

\newcommand{\U}{\LTLuntil}
\newcommand{\X}{\LTLnext}
\newcommand{\G}{\LTLglobally}
\newcommand{\F}{\LTLeventually}
\newcommand{\R}{\LTLrelease}

\newcommand{\traces}{\mathit{traces}}

\newcommand{\prop}{\mathsf{P}}


\renewcommand{\models}{\vDash}
\newcommand{\nmodels}{\nvDash}

\newcommand{\donotshow}[1]{}


\definecolor{causecolor}{RGB}{215, 95, 76}
\definecolor{effectcolor}{RGB}{142, 196, 222}

\usepackage{hyperref}
\usepackage[firstpage]{draftwatermark}

\SetWatermarkAngle{0}

\SetWatermarkText{\ifodd\thepage 
 \raisebox{20cm}{%
    \hspace{-6.7cm}%
    \href{https://doi.org/10.5281/zenodo.10946309}{\includegraphics{}}}
 \else
    \raisebox{20cm}{%
\hspace{11.45cm}%
\href{https://doi.org/10.5281/zenodo.10946309}{\includegraphics{3-reusable}}}
 \fi
}

\begin{document}

\title{Synthesis of Temporal Causality}

\author{Bernd Finkbeiner\,\orcidlink{0000-0002-4280-8441} \and Hadar Frenkel\,\orcidlink{0000-0002-3566-0338}\and Niklas Metzger\,\orcidlink{0000-0003-3184-6335} \and Julian Siber\textsuperscript{(\Letter)}\orcidlink{0000-0003-0842-0029}}

\authorrunning{Finkbeiner et al.}

\institute{CISPA Helmholtz Center for Information Security, Saarbrücken, Germany\\ \email{\{finkbeiner,hadar.frenkel,niklas.metzger,julian.siber\}@cispa.de}}

\maketitle 

\begin{abstract}
We present an automata-based algorithm to synthesize $\omega$-regular causes for $\omega$-regular effects on executions of a reactive system, such as counterexamples uncovered by a model checker. 
Our theory is a generalization of \emph{temporal causality}, which has recently been proposed as a framework for drawing causal relationships between trace properties on a given trace.
So far, algorithms exist only for verifying a single causal relationship and, as an extension, cause synthesis through enumeration, which is complete only for a small fragment of effect properties.
This work presents the first complete cause-synthesis algorithm for the class of $\omega$-regular effects. We show that in this case, causes are guaranteed to be $\omega$-regular themselves and can be computed as, e.g., nondeterministic Büchi automata. We demonstrate the practical feasibility of this algorithm with a prototype tool and evaluate its performance for cause synthesis and cause checking.

\keywords{Actual causality \and Cause synthesis \and Reactive systems \and Temporal logic \and Büchi automata}
\end{abstract}

\section{Introduction}

Causality is a key ingredient for explaining model-checking results~\cite{BallNR03,ChakiGS04,Leitner-FischerL13,SallingerWZ23} and a reasoning tool in several verification and synthesis algorithms~\cite{BaierCFFJS21,KupriyanovF13,KupriyanovF14}. These techniques have retrofitted causality definitions from philosophy~\cite{Hume1748,Lewis73b} and artificial intelligence~\cite{HalpernP01}, which were not designed for reactive systems with infinite dynamics and often fall short in such ad-hoc applications. For instance, 
popular approaches for explaining model-checking results highlight the counterexample trace at events that constitute causes~\cite{BeerBCOT09,CoenenDFFHHMS22,HorakCMHFMDFD22}. Yet, marking a (possibly infinite) set of events does not clearly describe 
the temporal behavior manifested by them since, e.g., two events can be individually responsible for the effect or only together. Similarly, the occurrence of events in the loop part of a trace can be relevant, e.g., only once or infinitely often.

To address such reoccurring problems arising with causal reasoning in reactive systems, Coenen et al.\ have recently proposed \emph{temporal causality} for drawing causal relationships between temporal properties on a given trace of a system~\cite{CoenenFFHMS22}. Causal properties can then be described symbolically with logics or automata, which give a concise description of the possibly infinite causal behavior, and are, moreover, amenable to verification algorithms.

\subsection{Temporal Causality}

At its core, temporal causality uses counterfactual reasoning to infer a causal relationship: A property is a cause for some effect property on a given trace, where both properties hold, if on all closest traces that do not satisfy the cause, the effect is not satisfied either. Additionally, the cause property has to be semantically minimal. Hence, it is a form of \emph{actual causation}~\cite{Halpern16}, which describes the concrete causal behavior in the given, \emph{actual} observation (the trace), and not all of the system behavior that may cause the effect (which loosely corresponds to the concept of \emph{general causation}).\footnote{Actual and general causality are also called token and type causality in the literature.}

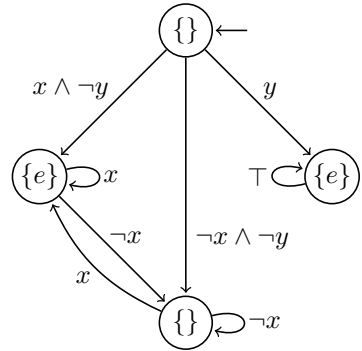
\begin{wrapfigure}{R}{0.45\linewidth}
	\centering\vspace{-2em}
         \begin{tikzpicture}[->,shorten >=1pt,semithick,auto,node distance=2.75cm, on grid,initial text=,
			every state/.style={minimum size=27pt,inner sep=1pt}]
         \node[state,initial right](1){$\{\}$};
         \node[state,below left = of 1](2){$\{e\}$};
         \node[state,below right = of 1](3){$\{e\}$};
         \node[state,below right = of 2](4){$\{\}$};
         \path[->,draw](1) edge[] node[swap,xshift=0.2em,yshift=-0.2em]{$x \land \lnot y$} (2)
         (2) edge[] node[pos=0.45,xshift=-0.3em,yshift=-0.3em]{$\lnot x$} (4)
         (4) edge[bend left=20] node[xshift=0.2em,yshift=0.2em]{$x$} (2);
         \path[->,draw](1) edge[] node[xshift=-0.2em,yshift=-0.2em]{$y$} (3)
         (3) edge[loop left] node[xshift=0.2em]{$\top$} (3);
         \path[->,draw](2) edge[loop right] node[xshift=-0.2em]{$x$} (2);
         \path[->,draw](1) edge[] node[pos=0.765]{$\lnot x \land \lnot y$} (4)
         (4) edge[loop right] node[xshift=-0.2em]{$\lnot x$} (4);
         \end{tikzpicture}
         \caption{Example system.}\label{fig:example1}
         \vspace{-1.5em}
\end{wrapfigure}

To illustrate, consider the system depicted in Fig.~\ref{fig:example1}, where $x$ and $y$ are inputs and $e$ is an output. We are interested in what input behavior causes the effect $\LTLeventually e$ on the trace $\pi = (\{x,e\})^\omega$ -- we skip the output label of the first position. Our first guess may be $y \lor \LTLeventually x$, which characterizes all system traces that satisfy $\LTLeventually e$. However, this is too general to describe the causal behavior on $\pi$. After all, 
the left disjunct~$y$ is not even satisfied by $\pi$. 
Let us see which condition fails. The counterfactual criterion holds: The closest system traces that do not satisfy $y \lor \LTLeventually x$ also do not satisfy the effect, as these are exactly the traces that go directly to the lower state labeled with the empty set and loop there infinitely. However, minimality is not satisfied, as the property $\LTLeventually x$ implies $y \lor \LTLeventually x$ (i.e., is semantically smaller) and also satisfies the counterfactual criterion: the closest trace that does not satisfy it is $(\{\})^\omega$. In particular, the existence of, e.g., trace $\{y,e\}(\{\})^\omega$ that also does not satisfy the cause $\LTLeventually x$, but still satisfies the effect $\LTLeventually e$, is irrelevant, as $(\{\})^\omega$ is closer to $\pi$ than the trace $\{y,e\}(\{\})^\omega$. It is worth pointing out that we only measure distance over inputs. Picking a property that is \emph{too} small fails the counterfactual criterion: If we picked $\LTLglobally x$, which implies $\LTLeventually x$, there would be, e.g., the closest trace $\{\}(\{x,e\})^\omega$ that still satisfies the effect.

In their original work~\cite{CoenenFFHMS22}, Coenen et al.\ showed that the requirements for a valid causal relationship can be encoded as a hyperproperty~\cite{ClarksonS10}, such that checking whether a given $\omega$-regular property is indeed the cause for a given $\omega$-regular effect on a trace can be decided via model checking. This has recently been implemented in a sketch-based algorithm for enumerating causes~\cite{BeutnerFFS23}, which is complete for effects containing $\LTLnext$ as the only temporal operator. That approach, of course, covers only a tiny fragment of the original theory. How to compute the cause for an arbitrary $\omega$-regular effect has remained an open question.

\subsection{Contributions and Structure}

As it turns out, the intricate balance between the counterfactual criterion and minimality of temporal causality gives rise to an intuitive order-theoretic characterization of causes: The complement of the cause is the upward closure of the negated effect property in the partial order defined by the similarity relation (measuring distance from the actual trace). We illustrate the intuition behind this characterization in Section~\ref{subsec:topology}, and formally prove it in Section~\ref{subsec:characterization}.

The consequence of our characterization is that if we can compute the upward closure of the negated effect $\overline{\Effect}$ and the complement of the result, then we can compute the cause for $\Effect$ on $\pi$. We show that if $\Effect$ is an $\omega$-regular property, $\pi$ in a lasso shape, and the similarity relation is also defined by a (relational) $\omega$-regular property, such an upward closure can be constructed as a nondeterministic Büchi automaton, which means that the cause (i.e., the complement of the automaton) again is an $\omega$-regular property. This approach forms the core of our cause synthesis algorithm, which we describe in Section~\ref{sec:synthesis}.

The complexity of our algorithm significantly scales in the size of the description of the similarity relation, which is problematic due to the complex and large similarity relations of previous work. Coenen et al.~\cite{CoenenFFHMS22} observed that with the original counterfactual criterion, these similarity relations need to satisfy the assumption that there is a non-empty set of closest traces for any actual trace and candidate cause, otherwise the counterfactual condition can be vacuously true. We tie this restriction to the \emph{limit assumption} first introduced by Lewis~\cite{Lewis73a} and study similarity relations through this lens. Concrete similarity relations that have been proposed so far~\cite{BeutnerFFS23,CoenenFFHMS22} satisfy the limit assumption by adding additional criteria, but these increase the size of the formula describing the similarity relation significantly. In Section~\ref{sec:definitions}, we show that we can instead modify the counterfactual condition of the causality definition to allow similarity relations that do not satisfy the limit assumption, using Lewis' semantics for counterfactuals~\cite{Lewis73a}, as extended to non-total similarity relations by Finkbeiner and Siber~\cite{FinkbeinerS23}. Crucially, this modification retains the original semantics of Coenen et al.\ for similarity relations that satisfy the limit assumption as long as the actual trace is deterministic. Hence, it generalizes our closure-based characterization and the corresponding algorithm to significantly simpler similarity relations.

In Section~\ref{sec:experiments}, we show through experiments with our prototype tool CORP that our modified counterfactual criterion leads to significantly faster computations in practice. We further compare our cause synthesis algorithm with the incomplete sketching approach of the tool CATS~\cite{BeutnerFFS23}. Last, we extend our approach to cause checking through cause synthesis with an additional equivalence check, which we compare with the checker implemented in CATS.

\paragraph{Contributions.} In summary, we make the following contributions:
\begin{itemize}
    \item We extend the theory of temporal causality to similarity relations that do not satisfy the limit assumption.
    \item We prove an order-theoretic characterization of causes as downward closed sets of the similarity relation.
    \item Based on this characterization, we develop the first complete method for $\omega$-regular cause synthesis.
    \item We present and evaluate a prototype implementation of our approach.
\end{itemize}

\section{Preliminaries}

We start by recalling preliminaries regarding our system model. Then, we provide background on automata and logics for describing temporal properties.

\subsubsection*{Systems and Traces.}
We model \emph{systems} as nondeterministic finite state machines $\mathcal{T} = (S, s_0, \AP, \delta, l)$ where 
$S$ is a finite set of \emph{states}, $s_0 \in S$ is the \emph{initial state}, $\AP = \In \cupdot \Out$ is the set of \emph{atomic propositions} consisting of \emph{inputs} $\In$ and \emph{outputs} $\Out$, $\delta: S \times 2^I \rightarrow 2^S$ is the \emph{transition function} determining a set of successor states for a given state and input, and $l: S \rightarrow 2^\Out$ is the \emph{labeling function} mapping each state to a set of outputs. 
A \emph{trace} of $\mathcal{T}$ is an infinite sequence $\pi = \pi[0] \pi[1] \ldots \in (2^\AP)^\omega$, with $\pi[i] = A \cup l(s_{i+1})$ for some $A \subseteq I$ and $s_{i+1} \in \delta(s_i,A)$ for all $i \geq 0$, i.e., we skip the label of the initial state in the first position.
$\mathit{traces}(\mathcal{T})$ is the set of all traces of $\mathcal{T}$.  For two subsets of atomic propositions $\Vap,\Wap \subseteq \AP$, let $\Vap|_\Wap = \Vap \cap \Wap$, $\pi|_\Wap = \pi_0|_\Wap\,\pi_1|_\Wap\ldots$ and $\pi =_\Vap \pi'$ iff $\pi|_\Vap = \pi'|_\Vap$ for traces $\pi,\pi'$. A trace $\pi_0$ is \emph{deterministic} in $\mathcal{T}$ iff for all $\pi_1 \in \mathit{traces}(\mathcal{T}): \pi_0 =_I \pi_1 \rightarrow \pi_0 = \pi_1$. A trace $\pi$ is \emph{lasso-shaped}, if there exist $i,j=i+1,k \in \mathbb{N}$ such that $\pi = \pi_0 \ldots \pi_i \cdot (\pi_{j} \ldots \pi_k)^\omega$, we then define $| \pi | = k - 1$.

\subsubsection*{Büchi Automata.} A \emph{nondeterministic B\"uchi automaton} (NBA)~\cite{Buechi62Decision} is a tuple $\mathcal{A} = (Q,\Sigma, Q^0, F, \Delta)$, where $Q$ denotes a finite set of \emph{states}, $\Sigma$ is a finite \emph{alphabet}, $Q^0 \subseteq Q$ is a set of \emph{initial states}, $F\subseteq Q$ is the set of \emph{accepting states}, and $\Delta: Q \times \Sigma \rightarrow 2^Q$ is the \emph{transition function} that maps a state and a letter to a set of possible successor states. The \emph{size} of an NBA $|\mathcal{A}|$ is the number of its states $|Q|$.
A \emph{run} of~$\mathcal A$ on an infinite \emph{word} $w = w_1w_2 \ldots \in \Sigma^{\omega}$ is an infinite sequence $ r = q_0q_1\ldots \in Q^{\omega}$ with $q_0 \in Q^0$ and $q_{i+1} \in \Delta(q_i,w_i)$ for all $i \in \mathbb N$. 
A run $r$ of the NBA is \emph{accepting} if there exist infinitely many $i \in \mathbb{N}$ such that $q_i \in F$. 
The \emph{language} $\Lang(\mathcal{A})$ is the set of all words that have an  accepting run. We say that some trace property  $\prop \subseteq (2^A)^\omega$ is \emph{$\omega$-regular}, if there is an NBA  $\mathcal{A}$ such that $\Lang(\mathcal{A}) = \prop$. A trace $\pi$ \emph{satisfies} any $\prop \subseteq (2^A)^\omega$, denoted by $\pi \models \prop$, iff $\pi|_A \in \prop$.

\subsubsection*{Linear-time Temporal Logic.} We use \emph{Linear-time Temporal Logic} (LTL)~\cite{Pnueli77} to succinctly specify a fragment of $\omega$-regular properties throughout the paper. LTL formulas are built using the following grammar, where $a \in \AP$:
\begin{equation*}
\varphi \Coloneqq a \mid \neg \varphi \mid \varphi \land \varphi \mid \LTLnext \varphi \mid \varphi \LTLu \varphi \enspace .
\end{equation*}
The semantics of LTL are given by the following satisfaction relation, which recurses over the positions $i$ of the trace $\pi$.
\begin{equation*}
\begin{array}{lll}
\pi,i \models a       & \text{iff } & a \in \pi[i] \\
\pi,i  \models \neg \varphi              & \text{iff } & \pi,i  \nmodels \varphi \\
\pi,i  \models \varphi \land \psi         & \text{iff } & \pi,i  \models \varphi \text{ and } \pi,i \models \psi \\
\pi,i \models \X \varphi                & \text{iff } & \pi,i+1 \models \varphi \\
\pi,i \models \varphi\U\psi             & \text{iff } & \exists j \geq i \text{ such that } \pi,j \models \psi \text{ and } \forall i \leq k < j \ldot \pi,k \models \varphi
\end{array}
\end{equation*}
A trace $\pi$ \emph{satisfies} a formula $\varphi$, denoted by $\pi \models \varphi$  iff the formula holds at the first position: $\pi,0 \models \varphi$. 
The \emph{language} $\Lang(\varphi)$ is the set of all traces that satisfy a formula $\varphi$. We also consider the usual derived Boolean connectives: $\lor$, $\rightarrow$, $\leftrightarrow$; and temporal operators: $\varphi \R \psi \equiv \neg(\neg \varphi \U \neg \psi)$, $\F \varphi \equiv \true \U \varphi$, $\G \varphi \equiv \false \R \varphi$.

\subsubsection*{Relational Properties.} 
Relational properties, or, \emph{hyperproperties}~\cite{ClarksonS10}, allow us to relate multiple system executions, and reason about their interaction. Counterfactual causality is a hyperproperty, and in particular, temporal causality as defined by Coenen et al.\ was shown to be a hyperproperty~\cite{CoenenFFHMS22}. 
Many logics to express temporal hyperproperties have been suggested in recent years (e.g., \cite{GutsfeldMO20,BeutnerF23LMCS,BaumeisterCBFS21,BeutnerFFM23}), the most prominent one being HyperLTL~\cite{ClarksonFKMRS14}. In this paper, we do not use a hyperlogic to express temporal causality, but we use the related notion of \emph{zipped traces} (e.g., \cite{BeutnerF23c}) for definig similarity relations. 
A \emph{zipped trace} of three traces $\pi_{0,1,2}$ is defined as $\mathit{zip}(\pi_0,\pi_1,\pi_2)[i] = \{(a,t_k) \; | \;  a \in \pi_k[i]\}$, i.e., we construct the zipped trace from disjoint unions of the positions of the three traces, where atomic propositions from the traces $\pi_{0,1,2}$ are distinguished through pairing them with the trace variables $t_{0,1,2}$.

\section{Overview: The Topology of Causality}

Our main results on cause synthesis heavily rely on a characterization of causes as certain downward closed sets of system traces that are ordered by a similarity relation. We illustrate the main intuition behind this characterization in Section~\ref{subsec:topology}. Then, in Section~\ref{subsec:limit}, we outline how we extend this result to more general similarity relations than originally considered by Coenen et al.~\cite{CoenenFFHMS22}. 

\subsection{Actual Causes as Downward Closed Sets of Traces}\label{subsec:topology}

Our central theorem states that the temporal cause for an effect $\Effect$ on some actual trace $\pi$ is the largest subset of $\Effect$ that is downward closed\footnote{$X \subseteq \mathit{traces}(\mathcal{T})$ is downward (upward) closed in $(\mathit{traces}(\mathcal{T}),\leq_\pi)$ if for all $\pi_x \in X$ and $\pi_t \in \mathit{traces}(\mathcal{T})$, $\pi_t \leq_\pi \pi_x$ ($\pi_x \leq_\pi \pi_t$) implies $\pi_t \in X$.} in the preordered set of system traces $(\mathit{traces}(\mathcal{T}),\leq_\pi)$, where $\leq_\pi$ is a (comparative) similarity relation that orders traces based on their similarity to $\pi$. Figure~\ref{fig:proset} illustrates this abstractly. Arrows together with nodes represent system executions, whose traces form $\mathit{traces}(\mathcal{T})$ and are ordered by the irreflexive reduction $<_\pi$ of the similarity relation. The set of system traces is, in general, infinite, such that there may be infinitely many other traces which are omitted from the illustration for sake of clarity. However, note that similarity relations must be designed such that all traces are further away from the actual trace $\pi$ than itself, i.e., $\pi$ is a minimum of $\leq_\pi$. The set of traces that satisfy the effect is depicted by the area that is colored in light blue. The actual trace $\pi$ is an element of this set, as this is the trace on which the cause for a given effect is analyzed. 

Coenen et al.'s temporal causality is counterfactual in nature, and now requires that the \emph{closest} traces outside of the cause $\Cause$, which in Figure~\ref{fig:proset} is marked by the red border, do not satisfy the effect. In the illustration, this is reflected by $\pi_b$ and $\pi_c$ not satisfying the effect, i.e., not being in a light blue area. At the same time, Coenen et al.\ require the cause to be the smallest set that satisfies this, which means that only traces that satisfy the effect are included: Otherwise, the upward closure\footnote{The upward closure of a set $X$ is the smallest upward closed set containing $X$.} of traces that do not satisfy the effect could be removed. Hence, in Figure~\ref{fig:proset} the area inside the red border is light blue. 

\begin{figure}[t] 
     \centering  
         \begin{subfigure}{.5\textwidth} 
         \centering
            \begin{tikzpicture}[>=latex,semithick,label distance =5pt,node distance = 0.75cm]
                \draw[rounded corners=1mm,fill=effectcolor,draw=causecolor,line width=2.3pt] (0.0,0.0) -- (0.5,-2.5)  -- (-0.5,-2.5) -- cycle;
                \draw[rounded corners=1mm,color=effectcolor,fill=effectcolor,line width=2.3pt] (0.0,0.0) -- (1.75,-2.5)  -- (2.95,-2.5) -- cycle;
                \draw[rounded corners=1mm,color=effectcolor,fill=effectcolor,line width=2.3pt] (0.0,0.0) -- (-1.75,-2.5)  -- (-2.95,-2.5) -- cycle;
                \node[draw, circle, inner sep=2pt, fill=white] at (0,0)(start) {};
                \node[draw, circle, inner sep=2pt, fill=white, below =0.7 of start](a2) {};
                \node[draw, circle, inner sep=2pt, fill=white, below =0.7 of a2](a1) {};
                \node[below =0.375 of a1](a0) {};
                \node[draw, circle, inner sep=2pt, fill=white, below left=0.763 and 0.25 of start](b2) {};
                \node[draw, circle, inner sep=2pt, fill=white, below left=0.763 and 0.25 of b2](b1) {};
                \node[below left=0.3815 and 0.065 of b1](b0) {};
                \node[draw, circle, inner sep=2pt, fill=white, below left=0.763 and 0.7 of start](c2) {};
                \node[draw, circle, inner sep=2pt, fill=white, below left=0.763 and 0.7 of c2](c1) {};
                \node[below left=0.3815 and 0.315 of c1](c0) {};
                
                \node[draw, circle, inner sep=2pt, fill=white, below right=0.763 and 0.25 of start](d2) {};
                \node[draw, circle, inner sep=2pt, fill=white, below right=0.763 and 0.25 of d2](d1) {};
                \node[below right=0.3815 and 0.065 of d1](d0) {};
                \node[draw, circle, inner sep=2pt, fill=white, below right=0.763 and 0.7 of start](e2) {};
                \node[draw, circle, inner sep=2pt, fill=white, below right=0.763 and 0.7 of e2](e1) {};
                \node[below right=0.3815 and 0.315 of e1](e0) {};
                
                \draw[-stealth,decorate,decoration={snake,segment length=3.5pt,amplitude=0.5pt,pre length=1pt,post length=3pt}] (start) -- (a2);
                \draw[-stealth,decorate,decoration={snake,segment length=3.5pt,amplitude=0.5pt,pre length=1pt,post length=3pt}] (a2) -- (a1);
                \draw[-stealth,decorate,decoration={snake,segment length=3.5pt,amplitude=0.5pt,pre length=1pt,post length=3pt}] (a1) -- (a0);
                \draw[-stealth,decorate,decoration={snake,segment length=3.5pt,amplitude=0.5pt,pre length=1pt,post length=3pt}] (start) -- (b2);
                \draw[-stealth,decorate,decoration={snake,segment length=3.5pt,amplitude=0.5pt,pre length=1pt,post length=3pt}] (b2) -- (b1);
                \draw[-stealth,decorate,decoration={snake,segment length=3.5pt,amplitude=0.5pt,pre length=1pt,post length=3pt}] (b1) -- (b0);
                \draw[-stealth,decorate,decoration={snake,segment length=3.5pt,amplitude=0.5pt,pre length=1pt,post length=3pt}] (start) -- (c2);
                \draw[-stealth,decorate,decoration={snake,segment length=3.5pt,amplitude=0.5pt,pre length=1pt,post length=3pt}] (c2) -- (c1);
                \draw[-stealth,decorate,decoration={snake,segment length=3.5pt,amplitude=0.5pt,pre length=1pt,post length=3pt}] (c1) -- (c0);
                
                \draw[-stealth,decorate,decoration={snake,segment length=3.5pt,amplitude=0.5pt,pre length=1pt,post length=3pt}] (start) -- (d2);
                \draw[-stealth,decorate,decoration={snake,segment length=3.5pt,amplitude=0.5pt,pre length=1pt,post length=3pt}] (d2) -- (d1);
                \draw[-stealth,decorate,decoration={snake,segment length=3.5pt,amplitude=0.5pt,pre length=1pt,post length=3pt}] (d1) -- (d0);
                \draw[-stealth,decorate,decoration={snake,segment length=3.5pt,amplitude=0.5pt,pre length=1pt,post length=3pt}] (start) -- (e2);
                \draw[-stealth,decorate,decoration={snake,segment length=3.5pt,amplitude=0.5pt,pre length=1pt,post length=3pt}] (e2) -- (e1);
                \draw[-stealth,decorate,decoration={snake,segment length=3.5pt,amplitude=0.5pt,pre length=1pt,post length=3pt}] (e1) -- (e0);
            
                \node[below = 0 of a0](p){$\boldsymbol{\pi}$};
                \node[below = 0 of c0](pa){$\boldsymbol{\pi_a}$};
                \node[below = 0 of b0](pb){$\boldsymbol{\pi_b}$};
                \node[below = 0 of d0](pc){$\boldsymbol{\pi_c}$};
                \node[below = 0 of e0](pd){$\boldsymbol{\pi_d}$};
                \node[left = -0.1 of p]{$>_\pi$};
                \node[right = -0.1 of p]{$<_\pi$};
                \node[left = -0.1 of pb]{$>_\pi$};
                \node[right = -0.1 of pc]{$<_\pi$};
                \node[draw=effectcolor,fill=effectcolor,rectangle,rounded corners,line width=2.3pt,inner sep=5pt]at(2.75,0){$\Effect$};
                \node[draw=causecolor,ultra thick,rectangle,rounded corners,line width=2.3pt,inner sep=5pt]at(2.75,-0.85){$\Cause$};
            \end{tikzpicture}
         \caption{The cause $\Cause$ as a subset of the effect $\Effect$.}\label{fig:proset}
     \end{subfigure}%
     \begin{subfigure}{.5\textwidth} 
        \centering\hspace{-2em}
            \begin{tikzpicture}[>=latex,thick,label distance=-1pt]
            
                \draw[rounded corners,fill=effectcolor,line width=2.3pt,draw=causecolor] (-0.5,0.7) -- (-0.5,0.3) -- (0.5,0.3) -- (0.5,0.7)  --  cycle;
                
                \draw[rounded corners,fill=effectcolor,line width=2.3pt,draw=effectcolor] (3.0,0.7) -- (3.0,0.3) -- (3.4,0.3) -- (3.4,0.7)  --  cycle;
                \node[] at (1.25,1.35){\textit{Without limit assumption:}};
            
                \node[](pi) at (0,0.5){$\boldsymbol{\pi} \ldots$};
                \node[](c) at (2.25,0.5){$\ldots \leq_\pi \boldsymbol{\pi_i} \ldots \leq_\pi \boldsymbol{\pi_j} \ldots $};
                
                \node[above left= -0.1 and -1.5 of c]{$\forall\exists$};
                \node[above left= -0.1 and -2.95 of c]{$\forall\exists$};
                
                \node[below left= -0.15 and -0.6 of c]{$\infty$};
                
            \end{tikzpicture}\vspace{0.25em}
            \begin{tikzpicture}[>=latex,thick,label distance=-1pt]
            
                \draw[rounded corners,fill=effectcolor,line width=2.3pt,draw=causecolor] (-0.5,0.7) -- (-0.5,0.3) -- (0.5,0.3) -- (0.5,0.7)  --  cycle;
                
                \draw[rounded corners,fill=effectcolor,line width=2.3pt,draw=effectcolor] (2.5,0.7) -- (2.5,0.3) -- (2.9,0.3) -- (2.9,0.7)  --  cycle;
                \node[] at (1,1.35){\textit{With limit assumption:}};
            
                \node[](pi) at (0,0.5){$\boldsymbol{\pi} \ldots$};
                
                \node[](c) at (1.97,0.5){$\leq_\pi \boldsymbol{\pi_i} \ldots \leq_\pi \boldsymbol{\pi_j} \ldots $};
                
                \node[above left= -0.1 and -0.97 of c]{$\forall$};
                
            \end{tikzpicture}\vspace{0.45em}
         \caption{Possible situations at the limit of $\Cause$.}\label{fig:limitassumption}
     \end{subfigure} 
     \caption{Two highlighted aspects of the cause $\Cause$ in the preordered set ($\mathit{traces}(\mathcal{T}),\leq_\pi$). Figure~\ref{fig:proset} illustrates that the cause is the largest downward-closed subset of the effect $\Effect$.
      The quantifiers in Figure~\ref{fig:limitassumption} show which traces outside of the cause are required to avoid the effect in our formalization (\emph{without limit assumption}) and in Coenen et al.'s definition~\cite{CoenenFFHMS22} (\emph{with limit assumption}).} 
\end{figure}
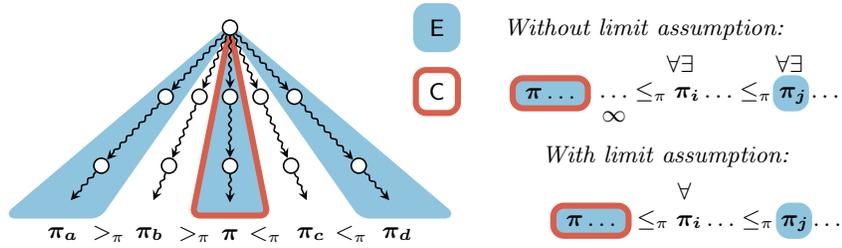

In this paper, we show that the balance between these criteria defines causes that are the largest subsets of $\Effect$ that are downward closed in the preordered set $(\mathit{traces}(\mathcal{T}),\leq_\pi)$. We also propose an algorithm that constructs these causes for effects that are $\omega$-regular properties and traces that are in a lasso-shape. Our algorithm first constructs a nondeterministic Büchi automaton for the complement of the cause $\Cause$. This complement is the upward closure of the negated effect $\overline{\Effect}$, which means it includes all traces for which there exists an at-least-as close trace that does not satisfy the effect. Since $\leq_\pi$ is reflexive, this naturally includes all traces in $\overline{\Effect}$, such as $\pi_b$ and $\pi_c$ in Figure~\ref{fig:proset}. It also includes all traces that are further away than a trace in $\overline{\Effect}$, such as $\pi_a$ and $\pi_d$. 

In the end, these mechanisms make temporal causality a form of \emph{actual} causality that describes a local generalization of the behavior that causes the effect on the actual trace. In the introductory example from Figure~\ref{fig:example1} with the actual trace $\pi = \{x,e\}^\omega$, traces in, e.g., $\Lang(y \land \lnot \LTLeventually x)$ are all further away from $\pi$ than the trace $\{\}^\omega$, which is in $\overline{\Effect} = \Lang(\LTLglobally\lnot e)$. Hence, $\Lang(y \land \lnot \LTLeventually x)$ is included in the upward closure of $\overline{\Effect}$, and none of its elements is included in the cause.

\subsection{Causality Without the Limit Assumption}\label{subsec:limit}
With our approach based on set closure, we can solve a central issue of temporal causality: Since the preordered set $(\mathit{traces}(\mathcal{T}),\leq_\pi)$ is infinite, there only exist traces in $\overline{\Cause}$ that are the \emph{closest} with respect to the actual trace $\pi$ if $(\overline{\Cause},\leq_\pi)$ is well-founded. If this is the case for all possible pairs of actual trace $\pi$ and cause candidate $\Cause$, we say the similarity relation satisfies the \emph{limit assumption}, after Lewis~\cite{Lewis73a}, who formalized it for counterfactual modal logic. Since Coenen et al.'s definition~\cite{CoenenFFHMS22} requires that all \emph{closest} traces avoid the effect, it is restricted to similarity relations that satisfy this assumption. Their counterfactual condition is illustrated in the lower part of Figure~\ref{fig:limitassumption}. If the limit assumption holds, any descending chain $\pi_j \geq_\pi \pi_{j-1} \geq_\pi \ldots$ stabilizes at some $\pi_i$, for which Coenen et al. require $\pi_i \in \overline{\Effect}$. 

If the limit assumption does not hold (upper part of Figure~\ref{fig:limitassumption}), there may be infinite chains $\pi_j\geq_\pi \pi_{j-1} \geq \ldots$ 
 for which a closest $\pi_i$ does not exist. In these instances, Coenen et al.'s criterion would be vacuously true. This is particularly problematic as the canonical similarity relation $\leq^\mathit{subset}_\pi$ does not satisfy the limit assumption. This metric orders two traces as $\pi_{j} \leq^\mathit{subset}_\pi \pi_k$ if the changes between $\pi_j$ and $\pi$ are a subset of the changes between $\pi_k$ and $\pi$. This may lead, for example, to the infinite chain $\{\}^\omega \geq_{\pi}^\mathit{subset} \{x\}\{\}^\omega \geq_{\pi}^\mathit{subset} \ldots$ in the preordered set $(\Lang(\LTLeventually \LTLglobally \lnot x), \leq_{\pi}^\mathit{subset})$, where $\pi = \{x\}^\omega$. Coenen et al.\ add additional constraints on top of $\leq^\mathit{subset}_\pi$ to ensure that it satisfies the limit assumption. These, however, make cause checking more expensive, as observed by Beutner et al.~\cite{BeutnerFFS23}, who therefore combine $\leq^\mathit{subset}_\pi$ with a vacuity check. While this is computationally better, this check simply fails in instances as outlined above, and so certain causes cannot be checked by this method~\cite{BeutnerFFS23}.

 In this work, we solve this conundrum by modifying the definition of temporal causality to accommodate similarity relations that satisfy the limit assumption. We change the central counterfactual condition from a universal quantification over the closest traces in $\overline{\Cause}$ to an $\forall\exists$-quantification over all traces $ \pi_j \in \overline{\Cause}$. 
 For each such trace $\pi_j$, we require the existence of 
  a closer trace $\pi_i \leq_\pi \pi_j$ that does not satisfy the effect. This is depicted in the upper part of Figure~\ref{fig:limitassumption}. Naturally, this quantification mirrors exactly the characterization of cause-complements via upward closed sets (cf.\ Section~\ref{subsec:characterization}). On the theoretical side, we show that if the similarity relation satisfies the limit assumption and a minor assumption on nondeterminism is met, our definition is 
  equivalent to Coenen et al.'s original definition (Section~\ref{subsec:generalization}). On the practical side, we confirm experimentally that our approach leads to significant improvements through the accommodation of simpler similarity relations that do not satisfy the limit assumption (Section~\ref{sec:experiments}).

\section{Generalized Temporal Causality}\label{sec:definitions}

In this section, we generalize the definition of temporal causality to 
accommodate similarity relations that do not satisfy the limit assumption.
We first recall similarity relations and formalize the limit aussmption (Section~\ref{subsec:distance}). Then we present our updated definition of temporal causality (Section~\ref{subsec:definition}). Last, we prove that it retains the original semantics in the special case considered by Coenen et al.\ with a minor additional assumption on nondeterminism (Section~\ref{subsec:generalization}).

\subsection{Similarity Relations and the Limit Assumption}\label{subsec:distance}
A \emph{comparative similarity relation} $\leq_\pi \, \subseteq (2^I)^\omega \times (2^I)^\omega$ is a partial order that 
orders traces by their $\emph{comparative}$ distance from the given actual trace $\pi$, i.e., it gives no quantitative but a relative measurement of distance: $\pi_0 \leq_\pi \pi_1$ means $\pi_0$ is \emph{at-least-as close} to $\pi$ as $\pi_1$. 
We measure distance over the set of inputs $I$, i.e., for two traces $\pi_{0,1} \in (2^\AP)^\omega$ we are only interested in 
$\pi_0|_I \leq_\pi \pi_1|_I$. 
 
If $I$ is clear from the context, we write $\pi_0 \leq_\pi \pi_1$. We require the actual trace to be closer to itself than any other trace, i.e., $\pi \leq_\pi \pi'$ for all $\pi' \in (2^\AP)^\omega$. The ternary relation $\leq$, where $(\pi_0,\pi_1,\pi_2) \in \ \leq$ iff $\pi_1 \leq_{\pi_0} \pi_2$,  
encodes the comparative similarity relations of all possible actual traces $\pi_0$.

\begin{example}\label{ex:subsetdis}
    To illustrate our formalism for similarity relations, consider the following \emph{subset-based} similarity relation $\leq^{\mathit{subset}}$ defined via the zipped trace $\mathit{zip}(\pi_0,\pi_1,\pi_2) \in (2^{\AP \times \{t_0,t_1,t_2\}})^\omega$. To ease comprehension, for some $a \in \AP$ we write $a_\mathit{actual}$ for $(a,t_0)$, $a_\mathit{close}$ for $(a,t_1)$, and  $a_\mathit{far}$ for $(a,t_2)$ to explicitly identify, e.g., propositions on the actual trace, in a given formula. We then have $\pi_{\mathit{close}} \leq^{\mathit{subset}}_{\pi_{\mathit{actual}}} \pi_{\mathit{close}}$ iff
    \begin{align*}
    \mathit{zip}(\pi_{\mathit{actual}},\pi_{\mathit{close}},\pi_{\mathit{far}}) \models \G \bigwedge_{i\in I} \big((i_\mathit{actual} \not\leftrightarrow i_\mathit{close}) \rightarrow (i_\mathit{actual} \not\leftrightarrow i_\mathit{far})\big) \enspace .
    \end{align*}
    For the three traces $\pi_\mathit{actual}, \pi_\mathit{close}, \pi_\mathit{far}$ this requirement states that
    the changes between $\pi_\mathit{actual}$ and $\pi_\mathit{close}$ are a subset of the changes between $\pi_\mathit{actual}$ and $\pi_\mathit{far}$, where we define the changes between two traces $\pi_0,\pi_1$ as $\mathit{changes}(\pi_0,\pi_1) = \{(a,i) \ | \ \pi_0[i] \neq_{\{a\}} \pi_1[i] \}$. For example, let $\pi = \{x\}(\{\})^\omega, \pi_0 = \{\}(\{\})^\omega, \pi_1 = \{\}\{y\}(\{\})^\omega$ 
    and $I = \{x,y\}$. Then, 
    $\pi_0  \leq_\pi \pi_1$, since $\mathit{changes}(\pi,\pi_0) = \{(x,0)\} \subseteq \{(x,0),(y,1)\} = \mathit{changes}(\pi,\pi_1)$. The trace $\pi_2 = \{x\}(\{y\})^\omega$, however, is incomparable to $\pi_0$ and $\pi_1$, as $\mathit{changes}(\pi,\pi_2) = \{(y,j) \ | \ j \geq 1 \}$ is not in any subset relationship with the respective sets for $\pi_0, \pi_1$.
\end{example}

The similarity relations considered in previous works~\cite{BeutnerFFS23,CoenenFFHMS22} are all fundamentally based on $\leq^{\mathit{subset}}$ as defined in Example~\ref{ex:subsetdis}, with added conditions to avoid infinite chains of closer traces. This is directly tied to the \emph{limit assumption} first studied by Lewis in his seminal work on counterfactual modal logic~\cite{Lewis73a}. In our setting, this assumption can be formalized as follows.

\begin{definition}[Limit Assumption]
    A similarity relation $\leq \ \subseteq (2^I)^\omega \times (2^I)^\omega \times (2^I)^\omega$ satisfies the \emph{limit assumption}, if for all traces $\pi \in (2^{I\cup O})^\omega$ and all possible causes $\Cause \subseteq (2^I)^\omega$, we have that $(\overline{\Cause},<_\pi)$ is well-founded, i.e., there is no infinite descending chain $\pi_0 >_\pi \pi_1 >_\pi \ldots \text{ with } \pi_i \in \overline{\Cause}$.
    
\end{definition}

This requirement means that there always exist \emph{closest} counterfactual traces that do not satisfy the cause no matter which actual trace we pick (except if all traces satisfy the cause). These closest traces would be ideal candidates for causal analysis, but unfortunately, they do not always exist, in particular not for the similarity relation $\leq^{\mathit{subset}}$, as stated in Proposition~\ref{prop:subset}. Its proof, like all others, is in Appendix~\ref{app:proofs} due to space reasons.

\begin{proposition}\label{prop:subset}
    $\leq^{\mathit{subset}}$ does not satisfy the limit assumption.
\end{proposition}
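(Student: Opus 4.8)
The plan is to refute the limit assumption by exhibiting a single actual trace $\pi$, a single candidate cause $\Cause$, and an explicit infinite $<^{\mathit{subset}}_\pi$-descending chain contained in $\overline{\Cause}$. Concretely, I would take $I = \{x\}$ (with no outputs), the actual trace $\pi = \{x\}^\omega$, and the candidate cause $\Cause = \Lang(\G\F x)$ over the input alphabet $2^{\{x\}}$, so that $\overline{\Cause} = \Lang(\F\G\lnot x)$. For each $n \in \mathbb{N}$ I set $\pi_n = \{x\}^n\{\}^\omega$; since $\pi_n$ contains no $x$ from position $n$ onwards, $\pi_n \models \F\G\lnot x$, and hence $\pi_n \in \overline{\Cause}$. (Any $I \supseteq \{x\}$ would work just as well, but a single input suffices.)

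The core of the argument is the computation of the changes sets. First I would invoke the characterization established in Example~\ref{ex:subsetdis}, namely that $\pi_j \leq^{\mathit{subset}}_\pi \pi_k$ iff $\mathit{changes}(\pi,\pi_j) \subseteq \mathit{changes}(\pi,\pi_k)$, so that the whole argument can be carried out in terms of $\mathit{changes}$ rather than the zipped-trace LTL formula. Since $\pi$ and $\pi_n$ agree on the first $n$ positions and differ (on $x$) at every position $j \geq n$, we obtain $\mathit{changes}(\pi,\pi_n) = \{(x,j) \mid j \geq n\}$. These sets are strictly decreasing under inclusion: $\mathit{changes}(\pi,\pi_{n+1}) \subsetneq \mathit{changes}(\pi,\pi_n)$, with $(x,n)$ witnessing strictness. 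Consequently $\pi_{n+1} \leq^{\mathit{subset}}_\pi \pi_n$ while $\pi_n \not\leq^{\mathit{subset}}_\pi \pi_{n+1}$, i.e.\ $\pi_n >^{\mathit{subset}}_\pi \pi_{n+1}$ for every $n$.

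This produces the infinite descending chain $\pi_0 >^{\mathit{subset}}_\pi \pi_1 >^{\mathit{subset}}_\pi \pi_2 >^{\mathit{subset}}_\pi \cdots$ with all $\pi_n \in \overline{\Cause}$, which shows that $(\overline{\Cause}, <^{\mathit{subset}}_\pi)$ is not well-founded and therefore that $\leq^{\mathit{subset}}$ does not satisfy the limit assumption. As sanity checks I would additionally verify that $\pi$ is the minimum of $\leq^{\mathit{subset}}_\pi$ (immediate, since $\mathit{changes}(\pi,\pi) = \emptyset$) and that $\Cause$ is an admissible candidate cause (any subset of $(2^I)^\omega$ qualifies). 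There is no genuine obstacle here beyond picking the example and carefully bookkeeping the changes sets; the only points requiring care are ensuring the chain is \emph{strictly} descending and that each $\pi_n$ really lies in $\overline{\Cause}$, both of which are settled by the computation above.
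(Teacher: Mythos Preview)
Your proposal is correct and takes essentially the same approach as the paper: the same actual trace $\{x\}^\omega$ (the paper uses $\{a\}^\omega$) and the same cause candidate $\Lang(\G\F x)$ with complement $\Lang(\F\G\lnot x)$. The only cosmetic difference is that you exhibit the explicit chain $\pi_n = \{x\}^n\{\}^\omega$, whereas the paper argues that every $\pi_1 \in \overline{\Cause}$ has a strictly closer $\pi_2 \in \overline{\Cause}$ (obtained by restoring $x$ at the first position where it is missing); both establish non-well-foundedness of $(\overline{\Cause},<^{\mathit{subset}}_\pi)$.
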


Since the original definition of Coenen et al.~\cite{CoenenFFHMS22} quantifies universally over closest traces, it can be vacuously satisfied if the similarity relation does not satisfy the limit assumption.
Previous works have therefore added additional constraints. For instance, Beutner et al.~\cite{BeutnerFFS23} propose $\leq^{\mathit{full}}$, which \emph{additionally} to the constraints of $\leq^{\mathit{subset}}$ (cf. Example~\ref{ex:subsetdis}) requires the following:
\begin{align*}
    \mathit{zip}(\pi_{\mathit{actual}},\pi_{\mathit{close}},\pi_{\mathit{far}}) \models \bigwedge_{i\in I} \big(\G \F ( i_\mathit{actual} \not\leftrightarrow  i_\mathit{close}) \rightarrow \G ( i_\mathit{close} \leftrightarrow  i_\mathit{far})\big) \enspace .
\end{align*}
This encodes that whenever $\pi_\mathit{close}$
differs differs from $\pi_\mathit{actual}$ on some input at infinitely many locations, then $\pi_\mathit{far}$ agrees with $\pi_\mathit{close}$ on this input. Hence, on any chain in $<^\mathit{full}_\pi$, infinite changes on some $i \in I$ eventually get converted into finite ones, which ensures finiteness of the chain since there are only finitely many atomic propositions. We confirm that this results in $\leq^{\mathit{full}}$ satisfying the limit assumption. 

\begin{proposition}\label{prop:full}
    $\leq^{\mathit{full}}$ satisfies the limit assumption.
\end{proposition}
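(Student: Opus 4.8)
The plan is to prove the stronger statement that for every actual trace $\pi$ the whole structure $((2^I)^\omega, <^\mathit{full}_\pi)$ is well-founded; since any subrelation of a well-founded relation is well-founded and $\overline{\Cause} \subseteq (2^I)^\omega$ for every candidate cause $\Cause$, this at once yields the limit assumption. Fix $\pi$, and for a trace $\sigma \in (2^I)^\omega$ write $\mathit{ch}(\sigma) = \mathit{changes}(\pi,\sigma)$ and $\mathrm{Inf}(\sigma) = \{\, i \in I \mid \{\, j \in \mathbb{N} \mid \pi[j] \neq_{\{i\}} \sigma[j] \,\} \text{ is infinite} \,\}$. Assume for contradiction that there is an infinite strictly descending chain $\pi_0 >^\mathit{full}_\pi \pi_1 >^\mathit{full}_\pi \pi_2 >^\mathit{full}_\pi \cdots$. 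Unfolding the two defining formulas of $\leq^\mathit{full}$ on each link, with $\pi_{k+1}$ in the role of $\pi_\mathit{close}$ and $\pi_k$ in the role of $\pi_\mathit{far}$, gives: (i) $\mathit{ch}(\pi_{k+1}) \subseteq \mathit{ch}(\pi_k)$, from the $\leq^\mathit{subset}$-conjunct (cf.\ Example~\ref{ex:subsetdis}); and (ii) for every $i \in \mathrm{Inf}(\pi_{k+1})$, the traces $\pi_{k+1}$ and $\pi_k$ agree on $i$ at all positions, from the additional conjunct of $\leq^\mathit{full}$.

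First I would note that (i) forces $\mathrm{Inf}$ to be non-increasing along the chain: if $\pi_{k+1}$ differs from $\pi$ on $i$ at infinitely many positions, so does $\pi_k$ by (i), whence $\mathrm{Inf}(\pi_{k+1}) \subseteq \mathrm{Inf}(\pi_k)$. As $I$ is finite, this chain of sets stabilizes, so there are $N \in \mathbb{N}$ and $J \subseteq I$ with $\mathrm{Inf}(\pi_k) = J$ for all $k \geq N$. For such $k$, applying (ii) to each $i \in J = \mathrm{Inf}(\pi_{k+1})$ shows $\pi_{k+1}$ and $\pi_k$ coincide on every input in $J$, so $\mathit{ch}(\pi_{k+1}) \cap (J \times \mathbb{N}) = \mathit{ch}(\pi_k) \cap (J \times \mathbb{N})$: the $J$-part of the change set is one fixed set for all $k \geq N$. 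Moreover, for $i \in I \setminus J$ and $k \geq N$ we have $i \notin \mathrm{Inf}(\pi_k)$, so $\pi_k$ differs from $\pi$ on $i$ only finitely often; hence, setting $D_k := \mathit{ch}(\pi_k) \cap ((I\setminus J)\times\mathbb{N})$, each $D_k$ with $k \geq N$ is finite, and by (i) the $D_k$ form a weakly decreasing chain $D_N \supseteq D_{N+1} \supseteq \cdots$ inside the finite set $D_N$.

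The decisive step is to upgrade the strictness of each link to a strict decrease of this finite chain. Fix $k \geq N$. Since $<^\mathit{full}_\pi$ is the irreflexive reduction of the partial order $\leq^\mathit{full}_\pi$, we have $\pi_{k+1} \leq^\mathit{full}_\pi \pi_k$ and $\pi_k \not\leq^\mathit{full}_\pi \pi_{k+1}$; because $\leq^\mathit{full}$ is the conjunction of its two formulas, $\pi_k \not\leq^\mathit{full}_\pi \pi_{k+1}$ means that either $\mathit{ch}(\pi_k) \not\subseteq \mathit{ch}(\pi_{k+1})$, or there is some $i \in \mathrm{Inf}(\pi_k) = J$ on which $\pi_k$ and $\pi_{k+1}$ disagree at some position. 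The second option is impossible: fact (ii) for the link $\pi_{k+1} \leq^\mathit{full}_\pi \pi_k$ (using $\mathrm{Inf}(\pi_{k+1}) = J$, as $k+1 \geq N$) says precisely that $\pi_{k+1}$ and $\pi_k$ agree on all of $J$. So $\mathit{ch}(\pi_k) \not\subseteq \mathit{ch}(\pi_{k+1})$, which together with (i) gives $\mathit{ch}(\pi_{k+1}) \subsetneq \mathit{ch}(\pi_k)$; since the $J$-restrictions of these sets are equal, the strict drop lies entirely in the $(I \setminus J)$-part, i.e.\ $D_{k+1} \subsetneq D_k$. Thus $D_N \supsetneq D_{N+1} \supsetneq \cdots$ is an infinite strictly descending chain of subsets of the finite set $D_N$ -- a contradiction, so no infinite descending chain exists.

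I expect the only genuinely delicate point to be this last step: translating ``$\pi_k \not\leq^\mathit{full}_\pi \pi_{k+1}$'' into a proper inclusion of change sets, which relies on invoking the extra conjunct (ii) of the \emph{forward} link to eliminate the ``disagreement on $J$'' failure mode once $\mathrm{Inf}$ has stabilized. Stabilizing $\mathrm{Inf}$, the finiteness of the $D_k$, and the reduction to well-foundedness of the whole order are all routine.
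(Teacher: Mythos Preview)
Your proof is correct and follows essentially the same architecture as the paper's: both argue by contradiction from an infinite descending chain, show that the set of inputs with infinitely many changes is monotone and hence stabilizes (your $\mathrm{Inf}$ is the complement of the paper's $F_i$), freeze the $J$-part of the change sets after stabilization, and derive a strictly decreasing chain of \emph{finite} change sets on $I\setminus J$. The one noteworthy difference is where strictness of the change-set inclusion is established: the paper claims $\mathit{changes}(\pi_{i+1},\pi)\subsetneq\mathit{changes}(\pi_i,\pi)$ immediately from irreflexivity (implicitly using that over a binary alphabet equal change sets force equal input sequences), whereas you postpone this and instead rule out the second failure mode of $\pi_k \not\leq^\mathit{full}_\pi \pi_{k+1}$ via the forward link after $\mathrm{Inf}$ has stabilized---both are valid, and your version makes the ``delicate point'' more explicit.
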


While satisfying the limit assumption is, in principle, useful, in the case of $\leq^\mathit{full}$ this comes at a significant cost: Its logical description contains a large conjunction over the inputs, each containing an implication between temporal formulas. Hence, any algorithmic approach to cause synthesis (and checking) that uses $\leq^\mathit{full}$ will scale poorly in the size of $I$. This motivates us to develop a modified definition of temporal causality that can directly work with the smaller, canonical similarity relation $\leq^\mathit{subset}$, while retaining most of the original semantics of Coenen et al.\ for similarity relations that satisfy the limit assumption, such as $\leq^\mathit{full}$.

\subsection{A General Definition of Temporal Causality}\label{subsec:definition}

We now develop our generalized definition of temporal causality for similarity relations that do not satisfy the limit assumption.

The idea behind our generalization stems from counterfactual modal logic as formalized by Lewis~\cite{Lewis73a}. Lewis' semantics a priori only work for total similarity relations, making them unsuitable for our setting. However, they were recently extended to non-total similarity relations by Finkbeiner and Siber~\cite{FinkbeinerS23}. We apply these semantics to our concrete problem to obtain a well-defined notion of causality for similarity relations that do not satisfy the limit assumption. In Section~\ref{subsec:generalization}, we show that our definition retains the original semantics proposed by Coenen et al.\ for similarity relations that satisfy the limit assumption.

\begin{definition}[Temporal Causality]\label{def:gencausality}
    Let $\mathcal{T}$ be a system, $\pi\in \mathit{traces}(\mathcal{T})$ a trace, $\leq_\pi$ a similarity relation, and $\Effect \subseteq (2^\AP)^\omega$ an effect property. We say that $\Cause  \subseteq (2^{I})^\omega$ is a \emph{cause} of $\Effect$ on $\pi$ in $\mathcal{T}$ if the following conditions hold.
	\begin{description}
		\item[SAT:] For all $\pi_0 \in \mathit{traces}(\mathcal{T}) $ such that $\pi_0 =_I \pi$ we have $\pi_0|_I \in \Cause$ and $\pi_0 \in \Effect$.
		\item[CF:] For all $\pi_0 \in \overline{\Cause}$ there is an at-least-as close trace $\pi_1 \in \overline{\Cause}$, i.e., with $\pi_1 \leq_\pi \pi_0$, such that there is a $\pi_2 \in \mathit{traces}(\mathcal{T}) $ with $\pi_1 =_I \pi_2$ and $\pi_2 \in \overline{\Effect}$.
		\item[MIN:] There is no $\Cause' \subset \Cause$ such that $\Cause'$ satisfies \emph{\textbf{SAT}} and \emph{\textbf{CF}}.
	\end{description}	
\end{definition}

 The main idea of the counterfactual criterion CF is that for every trace $\pi_0$ that does not satisfy the cause, there exists a closer trace $\pi_2$ that does not satisfy the cause \emph{and} the effect. The additional quantification over $\pi_1$ is a technicality included because the cause $\Cause \subseteq (2^{I})^\omega$ consists of input sequences while $\pi_2 \in \mathit{traces}(\mathcal{T})$ is a full system trace. It also closely mirrors the structure of Coenen et al.'s PC2 criterion (cf.\ Definition~\ref{def:coenen}) which it neatly generalizes to similarity relations that do not satisfy the limit assumption: If the assumption holds, then a $\pi_2$ is, in particular, required for the \emph{closest} traces $\pi_0$ in $\overline{\Cause}$, for which $\pi_2$ can only be instantiated by themselves. Hence, the closest traces are required to not satisfy the effect (we develop this comparison more formally in Section~\ref{subsec:generalization}). If the limit assumption does not hold and there exists an infinite chain of ever-closer traces $\pi_0 \in \overline{\Cause}$, the condition requires that for all these $\pi_0$ there is a closer $\pi_2$ that avoids the effect, even in infinity: No matter how far we descend on this chain, we are always guaranteed that we can descend further towards a closer counterfactual trace that does not satisfy the effect.

 \begin{example}
     To illustrate these conditions with a concrete example, consider the system from Figure~\ref{fig:example1}, the trace $\pi =\{x,e\}^\omega$, the effect $\Effect = \Lang(\LTLglobally\LTLeventually e)$, and the cause $\Cause = \Lang(\LTLglobally\LTLeventually x)$, with similarity relation $\leq^\mathit{subset}$. It is easy to that SAT is satisfied, as the system is deterministic and $\pi|_I = \{x\}^\omega \in \Cause$ and $\pi \in \Effect$. There is, as discussed in Section~\ref{subsec:limit}, an infinite chain in $(\Lang(\LTLeventually \LTLglobally \lnot x), \leq_{\pi}^\mathit{subset})$ and, hence, no closest trace. We require for all $\pi_0 \in \Lang(\LTLeventually \LTLglobally \lnot x) = \overline{\Cause}$ a $\pi_1 \in \overline{\Cause}$ with $\pi_1 \in \Lang(\LTLeventually \LTLglobally \lnot e) = \overline{\Effect}$ and a $\pi_2 =_I \pi_1$ such that $\pi_2 \in \overline{\Effect}$. In this case, we can pick $\pi_1$ as $\pi_0$ and $\pi_2$ as the corresponding system trace, hence CF is satisfied. To see that MIN is satisfied, consider any strict subset $\Cause' \subset \Cause$. Hence, there is some $\pi' \in \overline{\Cause'}$ such that $\pi' \models \LTLglobally \LTLeventually x$. Then, all system traces $\pi_2$ with $\pi_2 \leq_\pi \pi'$ satisfy $\pi_2 \models \LTLglobally\LTLeventually x$ by the definition of $\leq^\mathit{subset}_\pi$, and in this system this also means $\pi_2 \models \LTLglobally\LTLeventually e$. Hence, $\Cause$ satisfies MIN because no strict subset satisfies CF.
 \end{example}

\begin{remark}
 Note that Definition~\ref{def:gencausality} is not restricted to similarity relations that can be expressed via zipped traces and LTL formulas as used in the previous examples, but instead applies to any comparative similarity relation as defined at the start of this section.
\end{remark}

\subsection{Proving Generalization}\label{subsec:generalization}

This section is dedicated to proving that our generalization (Definition~\ref{def:gencausality}) is conservative, i.e., agrees with Coenen et al.'s original definition whenever the underlying similarity relation satisfies the limit assumption and the actual trace is deterministic. First, we recall Coenen et al.'s definition. 

\begin{definition}[Coenen et al.~\cite{CoenenFFHMS22}]\label{def:coenen}
	Let $\mathcal{T}$ be a system, $\pi\in \mathit{traces}(\mathcal{T})$ a trace, 
    $\leq_\pi$ a similarity relation, and $\Effect \subseteq (2^{O})^\omega$ an effect property. $\Cause \subseteq (2^{I})^\omega$ is a cause of $\Effect$ on $\pi$ in $\mathcal{T}$ if the following three conditions hold.
	\begin{description}
        \item[PC1:] $\pi|_I \in \Cause$ and $\pi \in \Effect$.
		\item[PC2:] For all closest counterfactual traces $\pi_0 \in \overline{\Cause}$, i.e., traces for which there are no closer traces $\pi_1 \in \overline{\Cause}$ with $\pi_1 <_{\pi} \pi_0$, there exists a $\pi_2 \in \mathit{traces}(\mathcal{T})$ such that $\pi_0 =_I \pi_2$ and $\pi_2 \in \overline{\Effect}$.
        \item[PC3:] There is no $\Cause' \subset \Cause$ such that $\Cause'$ satisfies \emph{\textbf{PC1}} and \emph{\textbf{PC2}}.
	\end{description}	
\end{definition}

Unlike in our updated definition, PC1 only works if the actual trace $\pi$ is deterministic. 
If the $\pi$ is nondeterministic, 
the effect can be avoided with no modifications at all to $\pi$ (which is minimal), hence the cause should be empty. PC1 does not reflect this and allows to build a cause that includes $\pi|_I$ (and possibly more), wrongfully implying that a modification of the sequence is required to avoid the effect. 
PC2  may be vacuously satisfied if the similarity relation does not satisfy the limit assumption, as outlined in Section~\ref{subsec:limit}.

\begin{remark}\label{rem:cfautomaton}
    Note that Coenen et al. consider traces $\pi \in \mathit{traces}(\mathcal{C}^\mathcal{T}_\pi)$ of the \emph{counterfactual automaton} $\mathcal{C}^\mathcal{T}_\pi$ for PC2. This automaton models \emph{contingencies}, which allow to partially reset outputs back to as they were on the actual trace $\pi$, and to change the system state accordingly. 
    For PC2 in Definition~\ref{def:coenen}, this means that the closest counterfactual traces $\pi_2$ do not have to avoid the effect themselves, but together with some contingency. This mechanism, inspired by Halpern's modified version of actual causality~\cite{Halpern15}, 
    was extended by
    Coenen et al.~\cite{CoenenDFFHHMS22,CoenenFFHMS22} to lasso-shaped traces and finite state machines to sometimes obtain more accurate causes. However, to guarantee meaningful results, the original system has to have unique output labels. Beutner et al.'s implementation~\cite{BeutnerFFS23} therefore allows to toggle the usage of contingencies. Similarly, our generalization works both with contingencies and without. For the latter case, one simply supplants $\mathcal{T}$ with $\mathcal{C}^\mathcal{T}_\pi$ in both definitions. Our cause synthesis algorithm can also handle contingencies, and our implementation allows to toggle them as a feature. Our theoretical contribution is independent of this detail. For completeness, we provide the definition of $\mathcal{C}^\mathcal{T}_\pi$ as defined by Coenen et al.~\cite{CoenenFFHMS22} in Appendix~\ref{app:cfautomaton}.
\end{remark}

We now proceed to show the equivalence between our definition (Definition~\ref{def:gencausality}) and Coenen et al.'s definition (Definition~\ref{def:coenen}) in case the limit assumption is fulfilled and the actual trace is deterministic. We start with proving the equivalence of the counterfactual conditions CF and PC2, which holds regardless of nondeterminism on the actual trace.

\begin{lemma}\label{lem:cf}
    Let $\mathcal{T}$ be a system, $\pi\in \mathit{traces}(\mathcal{T})$ a trace, $\Cause \subseteq (2^{I})^\omega$ a cause property, and $\Effect \subseteq (2^\AP)^\omega$ an effect property. Let $\leq$ be a similarity relation that satisfies the limit assumption. Then we have that \emph{PC2} is satisfied iff \emph{CF} is satisfied.
\end{lemma}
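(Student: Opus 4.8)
The plan is to prove the two implications separately, using the limit assumption crucially in one direction. Throughout, fix the system $\mathcal{T}$, the trace $\pi$, the candidate cause $\Cause$, the effect $\Effect$, and a similarity relation $\leq$ satisfying the limit assumption. Note that both CF and PC2 only talk about traces in $\overline{\Cause}$ and the associated counterfactual system traces, so the argument is entirely about the preordered set $(\overline{\Cause},\leq_\pi)$ and which of its elements admit a system trace avoiding $\Effect$. I will write ``$\pi_0$ is good'' as shorthand for: there is a $\pi_2 \in \mathit{traces}(\mathcal{T})$ with $\pi_0 =_I \pi_2$ and $\pi_2 \in \overline{\Effect}$. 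Then PC2 says every $\leq_\pi$-minimal element of $\overline{\Cause}$ is good, and CF says every element of $\overline{\Cause}$ has some good element at-least-as close to it.

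\textbf{CF $\Rightarrow$ PC2.} This direction does not even need the limit assumption. Let $\pi_0 \in \overline{\Cause}$ be a closest counterfactual trace, i.e.\ a $\leq_\pi$-minimal element of $\overline{\Cause}$: there is no $\pi_1 \in \overline{\Cause}$ with $\pi_1 <_\pi \pi_0$. By CF there is some $\pi_1 \in \overline{\Cause}$ with $\pi_1 \leq_\pi \pi_0$ that is good. By minimality of $\pi_0$ and antisymmetry of the partial order $\leq_\pi$, we get $\pi_1 = \pi_0$ (we cannot have $\pi_1 <_\pi \pi_0$, and $\pi_1 \leq_\pi \pi_0$ together with not-$<_\pi$ forces equality). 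Hence $\pi_0$ itself is good, which is exactly what PC2 demands.

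\textbf{PC2 $\Rightarrow$ CF.} Here the limit assumption does the work. Let $\pi_0 \in \overline{\Cause}$ be arbitrary. Consider the set of elements of $\overline{\Cause}$ that are at-least-as close to $\pi$ as $\pi_0$, i.e.\ $D = \{\pi' \in \overline{\Cause} \mid \pi' \leq_\pi \pi_0\}$. This set is nonempty since $\pi_0 \in D$. By the limit assumption, $(\overline{\Cause},<_\pi)$ is well-founded, hence so is its restriction to $D$; therefore $D$ contains a $<_\pi$-minimal element $\pi_1$. I claim $\pi_1$ is in fact $<_\pi$-minimal in all of $\overline{\Cause}$: if some $\pi'' \in \overline{\Cause}$ had $\pi'' <_\pi \pi_1$, then by transitivity $\pi'' <_\pi \pi_0$, so $\pi'' \leq_\pi \pi_0$, so $\pi'' \in D$, contradicting minimality of $\pi_1$ in $D$. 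Thus $\pi_1$ is a closest counterfactual trace, and PC2 tells us $\pi_1$ is good. Since moreover $\pi_1 \in \overline{\Cause}$ and $\pi_1 \leq_\pi \pi_0$, the trace $\pi_1$ witnesses CF for $\pi_0$. As $\pi_0$ was arbitrary, CF holds.

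\textbf{Main obstacle.} The only subtle point is the step in PC2 $\Rightarrow$ CF where well-foundedness of $\overline{\Cause}$ is used to extract a $<_\pi$-minimal element below $\pi_0$ and to argue it is globally minimal, not merely minimal within the down-set $D$. One must be careful that well-foundedness of $(\overline{\Cause},<_\pi)$ indeed transfers to the subset $D$ (any infinite descending chain in $D$ would be one in $\overline{\Cause}$) and that the "closest counterfactual traces" of Definition~\ref{def:coenen} are precisely the $<_\pi$-minimal elements of $\overline{\Cause}$ — which matches the phrasing there verbatim. Everything else is a routine unfolding of the definitions together with the partial-order axioms (antisymmetry for the easy direction, transitivity for the hard one).
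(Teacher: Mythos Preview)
Your proof is correct and follows essentially the same approach as the paper's: both directions use the same key ideas (for CF $\Rightarrow$ PC2, minimality of $\pi_0$ forces the CF-witness $\pi_1$ to equal $\pi_0$; for PC2 $\Rightarrow$ CF, well-foundedness yields a $\leq_\pi$-minimal element below any given $\pi_0$, to which PC2 applies). Your argument is slightly more explicit in the PC2 $\Rightarrow$ CF direction---you spell out the down-set $D$ and why local minimality in $D$ implies global minimality in $\overline{\Cause}$---whereas the paper simply asserts that some element of the minimal set $C_\pi$ lies below $\pi_0$, but the content is the same.
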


The full proof of Lemma~\ref{lem:cf} and of Theorem~\ref{thm:equiv} below, appear in Appendix~\ref{app:proofs}. With Lemma~\ref{lem:cf} at hand, we only need to address the differences between PC1 and SAT. It is easy to see that their equivalence fails when behavior on the actual trace $\pi$ is nondeterministic, i.e., when there is another trace that is input-equivalent to $\pi$ but does not satisfy the effect. In such a case, PC1 is satisfied but SAT is not. Hence, our definition is equivalent to Coenen et al.'s definition only in deterministic systems, as we deliberately diverge in the case of nondeterminism on the actual trace. Notably, Lemma~\ref{lem:cf} holds for both deterministic and nondeterministic systems, and determinism is only relevant on the actual trace.
The restriction to output-only effects $\Effect \subseteq (2^O)^\omega$ is inherited from Coenen et al.'s definition, but technically not necessary.
\begin{theorem}\label{thm:equiv}
    Let $\leq$ be a similarity relation that satisfies the limit assumption. Then $\Cause \subseteq (2^{I})^\omega$ is a cause for $\Effect \subseteq (2^O)^\omega$ on a trace $\pi$ that is deterministic in $\mathcal{T}$ according to our definition (Definition~\ref{def:gencausality}) if and only if it is a cause according to Coenen et al.'s definition (Definition~\ref{def:coenen}). 
\end{theorem}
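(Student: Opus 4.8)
The plan is to split the biconditional into the three natural pairs of conditions---\textbf{SAT} versus \textbf{PC1}, \textbf{CF} versus \textbf{PC2}, and \textbf{MIN} versus \textbf{PC3}---show each pair equivalent under the stated hypotheses, and then assemble. The middle pair is already done: Lemma~\ref{lem:cf} tells us that whenever $\leq$ satisfies the limit assumption, \textbf{PC2} holds iff \textbf{CF} holds, and this statement is made for an \emph{arbitrary} candidate cause property $\Cause$ (and does not mention determinism), so it applies verbatim both to $\Cause$ itself and to any subset candidate. Hence only the two comparatively easy arguments around \textbf{SAT}/\textbf{PC1} and the minimality clauses remain. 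Throughout, by Remark~\ref{rem:cfautomaton} we read both definitions over the same underlying automaton (either $\mathcal{T}$, or $\mathcal{C}^\mathcal{T}_\pi$ if contingencies are enabled), so that ``$\pi$ deterministic'' refers to that automaton.

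For \textbf{SAT} $\Leftrightarrow$ \textbf{PC1}, the one observation needed is that determinism of $\pi$ in $\mathcal{T}$ means $\{\pi_0 \in \mathit{traces}(\mathcal{T}) \mid \pi_0 =_I \pi\} = \{\pi\}$. The universal quantifier in \textbf{SAT} therefore collapses and \textbf{SAT} becomes literally the conjunction $\pi|_I \in \Cause$ and $\pi \in \Effect$, which is exactly \textbf{PC1} (here the restriction $\Effect \subseteq (2^O)^\omega$ makes ``$\pi \in \Effect$'' mean the same thing in both definitions). Both directions are immediate, and I would stress that this equivalence uses only determinism of $\pi$ and holds for every candidate cause property, since that is what lets us transfer it to subsets in the next step.

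For \textbf{MIN} $\Leftrightarrow$ \textbf{PC3}, I would instantiate the two equivalences above at an arbitrary $\Cause' \subset \Cause$: $\Cause'$ satisfies \textbf{SAT} iff it satisfies \textbf{PC1} (determinism of $\pi$), and $\Cause'$ satisfies \textbf{CF} iff it satisfies \textbf{PC2} (limit assumption, via Lemma~\ref{lem:cf}). Consequently ``there exists $\Cause' \subset \Cause$ satisfying \textbf{SAT} and \textbf{CF}'' is equivalent to ``there exists $\Cause' \subset \Cause$ satisfying \textbf{PC1} and \textbf{PC2}'', and negating both sides gives \textbf{MIN} $\Leftrightarrow$ \textbf{PC3}. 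Combining the three equivalences yields that $\Cause$ is a cause under Definition~\ref{def:gencausality} iff it is a cause under Definition~\ref{def:coenen}.

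The substantive difficulty does not lie in this assembly but in Lemma~\ref{lem:cf}, whose proof must reconcile Coenen et al.'s universal quantification over the \emph{closest} traces of $\overline{\Cause}$ with our $\forall\exists$-formulation: from an arbitrary $\pi_0 \in \overline{\Cause}$ one descends, using well-foundedness of $(\overline{\Cause},<_\pi)$ granted by the limit assumption, to a $\leq_\pi$-minimal element, which \textbf{PC2} forces into $\overline{\Effect}$ and which then witnesses \textbf{CF} for $\pi_0$; conversely, a closest trace can only witness \textbf{CF} through itself, giving back \textbf{PC2}. The one additional subtlety to keep flagged is the contingency/counterfactual-automaton caveat of Remark~\ref{rem:cfautomaton}: determinism must be assumed relative to whichever automaton both definitions are evaluated over, so that the singleton-collapse argument for \textbf{SAT} remains valid.
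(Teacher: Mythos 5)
Your proposal is correct and follows essentially the same route as the paper: invoke Lemma~\ref{lem:cf} for CF/PC2, use determinism of $\pi$ to collapse SAT to PC1, and observe that the minimality clauses then coincide. Your explicit instantiation of the two equivalences at arbitrary subsets $\Cause' \subset \Cause$ to get MIN $\Leftrightarrow$ PC3 is slightly more careful than the paper's remark that ``MIN and PC3 state the same requirement,'' but it is the same argument.
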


\section{Cause Synthesis}\label{sec:synthesis}

In this section, we develop our algorithm for synthesizing causes. In Section~\ref{subsec:characterization} we formalize the characterization of a cause as the complement of the upper closure of the negated effect, which we have discussed intuitively in Section~\ref{subsec:topology}. In Section~\ref{subsec:integralautomaton} we provide an algorithm for cause synthesis in the $\omega$-regular setting, when the effect is given as a nondeterministic Büchi automaton and the actual trace is in a lasso shape.

\subsection{Proving Our Characterization }\label{subsec:characterization}

For this section, we fix a system $\mathcal{T}$, an actual trace $\pi \in traces(\mathcal{T})$, a similarity relation $\leq$, and an effect $\Effect$. We now show that, if it exists, the cause for $\Effect$ on $\pi$ is the complement of the upward closure of $\overline{\Effect}$ in $(\traces(\mathcal{T}),\leq_\pi)$. Formally, we construct a set $\mathsf{D}$ that is a cause for $\Effect$ on $\pi$ via its complement:
\begin{align*}
    \overline{\mathsf{D}} &= \{\,\rho  \in (2^I)^\omega \,  \mid \, \exists \sigma \in \mathit{traces}(\mathcal{T}) . \ \sigma \leq_\pi \rho \land \sigma \in \overline{\Effect} \,\} \enspace , \text{ hence}\\
    \mathsf{D} &= \{\,\rho \in (2^I)^\omega \,  \mid \, \forall \sigma \in \mathit{traces}(\mathcal{T}) . \ \sigma \leq_\pi \rho \rightarrow \sigma \in \Effect \,\} \enspace .
\end{align*}
The set $\mathsf{D}$ directly corresponds to the (unique) cause if there exists one, and is empty if there is none. We establish this in a series of lemmas, see Appendix~\ref{app:proofs} for their full proofs. 

\begin{lemma}\label{lem:iscause}
If the set $\mathsf{D}$ is non-empty, it is a cause for $\Effect$ on $\pi$ in $\mathcal{T}$.
\end{lemma}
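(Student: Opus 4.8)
The plan is to verify that the set $\mathsf{D}$ defined via its complement satisfies all three conditions \textbf{SAT}, \textbf{CF}, and \textbf{MIN} of Definition~\ref{def:gencausality}, under the assumption that $\mathsf{D} \neq \emptyset$. I would structure the argument condition by condition, using throughout the reflexivity of $\leq_\pi$ and the fact that $\pi$ is a minimum of $\leq_\pi$.

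For \textbf{SAT}, take any $\pi_0 \in \mathit{traces}(\mathcal{T})$ with $\pi_0 =_I \pi$. I need $\pi_0 \in \Effect$ and $\pi_0|_I \in \mathsf{D}$. For the first part, I would argue that if some such $\pi_0 \notin \Effect$, then $\pi_0$ witnesses $\pi|_I \notin \mathsf{D}$ (since $\pi_0 \leq_\pi \pi|_I$ because $\pi$ is the minimum and $\pi_0 =_I \pi$), and more strongly witnesses that \emph{every} $\rho \in (2^I)^\omega$ fails to be in $\mathsf{D}$ — because $\pi_0 \leq_\pi \rho$ for all $\rho$ — contradicting non-emptiness of $\mathsf{D}$. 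So every trace input-equivalent to $\pi$ is in $\Effect$. Given that, $\pi|_I \in \mathsf{D}$ holds: any $\sigma \leq_\pi \pi|_I$ with $\sigma \in \mathit{traces}(\mathcal{T})$ must, by the minimum property, be input-equivalent to $\pi$ (this needs a short argument that $\sigma \leq_\pi \pi|_I$ together with $\pi \leq_\pi \sigma$ forces $\sigma =_I \pi$ in the partial order on inputs), hence lies in $\Effect$. Then since $\pi_0 =_I \pi$, also $\pi_0|_I = \pi|_I \in \mathsf{D}$.

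For \textbf{CF}, take $\pi_0 \in \overline{\mathsf{D}}$. By definition of $\overline{\mathsf{D}}$ there is $\sigma \in \mathit{traces}(\mathcal{T})$ with $\sigma \leq_\pi \pi_0$ and $\sigma \in \overline{\Effect}$. I would then set $\pi_1 := \sigma|_I$ and $\pi_2 := \sigma$, and check: $\pi_1 \leq_\pi \pi_0$ holds since we measure distance over inputs; $\pi_2 =_I \pi_1$ trivially; $\pi_2 \in \overline{\Effect}$ by choice of $\sigma$; and $\pi_1 \in \overline{\mathsf{D}}$ because $\sigma \leq_\pi \sigma|_I$ (reflexivity over inputs) exhibits $\sigma$ itself as the required witness for $\sigma|_I \notin \mathsf{D}$. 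This gives exactly the witnesses CF demands. For \textbf{MIN}, suppose $\Cause' \subsetneq \mathsf{D}$ satisfies \textbf{SAT} and \textbf{CF}; pick $\rho \in \mathsf{D} \setminus \Cause'$, so $\rho \in \overline{\Cause'}$. Applying CF for $\Cause'$ to $\rho$ yields $\pi_1 \leq_\pi \rho$ in $\overline{\Cause'}$ and a system trace $\pi_2 =_I \pi_1$ with $\pi_2 \in \overline{\Effect}$. But then $\pi_2 \leq_\pi \pi_1 \leq_\pi \rho$ (again since distance is over inputs, $\pi_2 \leq_\pi \pi_1$ from $\pi_2 =_I \pi_1$), so $\pi_2$ witnesses $\rho \in \overline{\mathsf{D}}$, contradicting $\rho \in \mathsf{D}$.

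The main obstacle I anticipate is being careful about the two-sorted nature of the objects: $\mathsf{D}$ and $\Cause$ live in $(2^I)^\omega$, whereas traces live in $(2^{\AP})^\omega$, and the relation $\leq_\pi$ is formally defined on input sequences while being used to compare full traces via their input projections. The bookkeeping of when "$\sigma \leq_\pi \rho$" means "$\sigma|_I \leq_\pi \rho$" must be made explicit, and the antisymmetry/minimum argument in \textbf{SAT} (deducing $\sigma =_I \pi$ from $\sigma \leq_\pi \pi|_I$) relies on $\leq_\pi$ being a partial order with $\pi|_I$ as its least element — a point worth stating cleanly. Everything else is a routine unfolding of the definitions of $\mathsf{D}$, $\overline{\mathsf{D}}$, and the three causality conditions.
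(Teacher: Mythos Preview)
Your proposal is correct and follows essentially the same three-part argument as the paper's proof, verifying \textbf{SAT}, \textbf{CF}, and \textbf{MIN} directly from the definition of $\mathsf{D}$ and the minimum property of $\pi$ in $\leq_\pi$. If anything, you are more careful than the paper about the $(2^I)^\omega$ versus $(2^{\AP})^\omega$ typing (e.g., setting $\pi_1 := \sigma|_I$ rather than $\pi_1 := \sigma$, and explicitly checking $\pi_1 \in \overline{\mathsf{D}}$ in the CF step), which is a welcome refinement rather than a different route.
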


The proof of Lemma~\ref{lem:iscause} shows that $\mathsf{D}$ satisfies Definition~\ref{def:gencausality} assuming it is non-empty. The assumption is only required for SAT, as this criterion requires that $\pi$ and all input-equivalent traces are in the cause. CF follows from the definition of $\mathsf{D}$, and for MIN we can show that any strict subset of $\mathsf{D}$ does not satisfy CF.

\begin{lemma}\label{lem:nocause}
Iff the set $\mathsf{D}$ is empty, there exists no cause that satisfies SAT.
\end{lemma}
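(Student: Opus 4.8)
The plan is to establish the contrapositive form of the biconditional, namely that $\mathsf{D}$ is non-empty if and only if there is a cause satisfying \textbf{SAT}. One direction is essentially free: if $\mathsf{D} \neq \emptyset$, then Lemma~\ref{lem:iscause} already tells us that $\mathsf{D}$ itself is a cause for $\Effect$ on $\pi$ in $\mathcal{T}$, and every cause satisfies \textbf{SAT} by Definition~\ref{def:gencausality}. So the real content is in the other direction.

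For that direction I would assume that some $\Cause \subseteq (2^I)^\omega$ satisfies \textbf{SAT} --- note we need not even assume $\Cause$ is a full cause, only \textbf{SAT} --- and prove that $\pi|_I \in \mathsf{D}$, which immediately gives $\mathsf{D} \neq \emptyset$. Unfolding the definition of $\mathsf{D}$, the goal $\pi|_I \in \mathsf{D}$ means: for every $\sigma \in \traces(\mathcal{T})$ with $\sigma|_I \leq_\pi \pi|_I$ we have $\sigma \in \Effect$. Fix such a $\sigma$. The crucial observation is that $\pi$ is a minimum of the similarity relation, i.e., $\pi|_I \leq_\pi \sigma|_I$ holds by the standing requirement on $\leq_\pi$; combined with $\sigma|_I \leq_\pi \pi|_I$ and the antisymmetry of the partial order $\leq_\pi$ on $(2^I)^\omega$, this forces $\sigma|_I = \pi|_I$, i.e., $\sigma =_I \pi$. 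Now \textbf{SAT} applies to $\sigma$ (it is a trace of $\mathcal{T}$ that is input-equivalent to $\pi$) and yields $\sigma \in \Effect$, as desired. Since $\sigma$ was arbitrary, $\pi|_I \in \mathsf{D}$, so $\mathsf{D}$ is non-empty.

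The only points requiring a little care are bookkeeping ones: first, the mild overloading of $\leq_\pi$ when comparing a full trace $\sigma \in \traces(\mathcal{T})$ with an input sequence $\rho \in (2^I)^\omega$ --- which by the conventions of Section~\ref{subsec:distance} means $\sigma|_I \leq_\pi \rho$ --- so that the whole argument stays on the level of input projections; and second, the genuine use of antisymmetry, which is available precisely because $\leq_\pi$ is stipulated to be a partial order on $(2^I)^\omega$ (even though its lift to $\traces(\mathcal{T})$ is only a preorder). I do not anticipate any real obstacle here; the statement is a short consequence of the minimality of $\pi$ in $\leq_\pi$ together with Lemma~\ref{lem:iscause}. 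As a byproduct, the same argument shows the slightly stronger fact that $\mathsf{D} = \emptyset$ iff no set whatsoever satisfies \textbf{SAT}.
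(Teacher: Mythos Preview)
Your proposal is correct and follows essentially the same argument as the paper: both directions hinge on the fact that $\pi$ is the minimum of $\leq_\pi$, so any $\sigma \in \traces(\mathcal{T})$ with $\sigma \leq_\pi \pi$ must satisfy $\sigma =_I \pi$, after which \textbf{SAT} applies. The only cosmetic differences are that you argue via the contrapositive and invoke Lemma~\ref{lem:iscause} for the easy direction, whereas the paper argues each direction directly (using the explicit candidate $\Cause' = \{\pi|_I\}$ in place of your appeal to Lemma~\ref{lem:iscause}); your observation that antisymmetry of $\leq_\pi$ on $(2^I)^\omega$ is what is really being used is exactly what the paper's phrase ``unique minimum'' encodes.
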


Lemma~\ref{lem:nocause} serves two purposes. First, it helps us argue for the completeness of our construction. Second, it shows that the only reason why there may be no cause is due to a nondeterministic actual trace. To fully argue completeness, we show that causes are unique, and hence $\mathsf{D}$ is the only relevant cause in all cases.

\begin{lemma}~\label{prop:uniqueness}
    Causes are (semantically) unique: There can be no two sets $\Cause \neq \Cause'$ that are both causes for some effect property $\Effect$ on a trace $\pi$ in some system $\mathcal{T}$.
\end{lemma}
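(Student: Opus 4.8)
The plan is to prove uniqueness by leveraging the earlier lemmas that pin down $\mathsf{D}$ as the essentially forced candidate. Suppose towards a contradiction that $\Cause$ and $\Cause'$ are both causes for $\Effect$ on $\pi$ in $\mathcal{T}$, with $\Cause \neq \Cause'$. First I would observe that both must satisfy SAT, so in particular the cause set is non-empty (it contains $\pi|_I$ and all input-equivalent traces' input projections); by Lemma~\ref{lem:nocause} this forces $\mathsf{D}$ to be non-empty, and by Lemma~\ref{lem:iscause} $\mathsf{D}$ is itself a cause. So it suffices to show that any cause $\Cause$ must equal $\mathsf{D}$.

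The core step is to show the two inclusions $\mathsf{D} \subseteq \Cause$ and $\Cause \subseteq \mathsf{D}$ for an arbitrary cause $\Cause$. For $\Cause \subseteq \mathsf{D}$: take $\rho \in \Cause$ and suppose $\rho \notin \mathsf{D}$, i.e.\ there is $\sigma \in \traces(\mathcal{T})$ with $\sigma \leq_\pi \rho$ and $\sigma \in \overline{\Effect}$. I would then argue that CF, applied to $\Cause$, is already witnessed without $\rho$ — more precisely, I want to show $\Cause \setminus (\text{upward closure of } \{\rho\} \cap \Cause)$, or more cleanly the set $\mathsf{D} \cap \Cause$, still satisfies SAT and CF, contradicting MIN for $\Cause$ since it would be a strict subset. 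The cleanest route: show directly that $\mathsf{D} \cap \Cause$ satisfies SAT (immediate, since $\pi$ and input-equivalent traces lie in both $\mathsf{D}$, by the same argument as Lemma~\ref{lem:iscause}'s SAT case, and in $\Cause$) and CF (for any $\pi_0 \in \overline{\mathsf{D} \cap \Cause} = \overline{\mathsf{D}} \cup \overline{\Cause}$: if $\pi_0 \in \overline{\mathsf{D}}$ use the CF witness from $\mathsf{D}$, noting that witness lies in $\overline{\mathsf{D}} \subseteq \overline{\mathsf{D} \cap \Cause}$; if $\pi_0 \in \overline{\Cause}$ use the CF witness from $\Cause$, and observe that by the definition of $\mathsf{D}$ this witness — an at-least-as-close trace mapping to some $\pi_2 \in \overline{\Effect}$ — is itself outside $\mathsf{D}$, hence outside $\mathsf{D} \cap \Cause$). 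Thus $\mathsf{D} \cap \Cause$ satisfies SAT and CF; if it were a strict subset of $\Cause$ this contradicts MIN, so $\Cause \subseteq \mathsf{D}$. By the symmetric argument applied with $\mathsf{D}$ in the role of the "other" cause (which is legitimate since $\mathsf{D}$ is a cause by Lemma~\ref{lem:iscause}), $\mathsf{D} \cap \Cause$ also cannot be a strict subset of $\mathsf{D}$, giving $\mathsf{D} \subseteq \Cause$ — or, more simply, once we know $\Cause \subseteq \mathsf{D}$ and $\mathsf{D}$ satisfies SAT and CF, if the inclusion were strict then $\Cause$ would be a strict subset of $\mathsf{D}$ violating MIN for $\mathsf{D}$. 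Hence $\Cause = \mathsf{D}$, and since this holds for every cause, any two causes coincide.

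I expect the main obstacle to be the CF bookkeeping in the "if $\pi_0 \in \overline{\Cause}$" case: one needs to verify that the witness trace supplied by CF for $\Cause$ genuinely lands in $\overline{\mathsf{D}}$, which is exactly the content of the definition of $\overline{\mathsf{D}}$ (an at-least-as-close trace whose input-equivalent system trace avoids $\Effect$), so this is really just unfolding definitions — but care is needed with the two-layer quantification over $\pi_1$ (input sequences) and $\pi_2$ (full system traces) in CF, and with the fact that $\mathsf{D}$ and $\Cause$ live in $(2^I)^\omega$ while the $\overline{\Effect}$ witness lives in $(2^\AP)^\omega$. A secondary subtlety is making sure the non-emptiness hypothesis of Lemma~\ref{lem:iscause} is discharged before invoking it, which the SAT condition handles. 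Everything else is routine set manipulation.
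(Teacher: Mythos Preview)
Your proof is correct, but it takes a different route from the paper's. The paper argues directly: given two putative causes $\Cause \neq \Cause'$, it shows that $\Cause \cap \Cause'$ satisfies SAT and CF (for CF, any $\pi_0 \in \overline{\Cause \cap \Cause'}$ lies in $\overline{\Cause}$ or $\overline{\Cause'}$, and the CF witness for the respective cause already lies in the corresponding complement, hence in $\overline{\Cause \cap \Cause'}$), which contradicts MIN for whichever of $\Cause,\Cause'$ strictly contains the intersection. This is self-contained and does not invoke $\mathsf{D}$ or the earlier lemmas at all.

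Your version routes everything through $\mathsf{D}$: you first appeal to Lemmas~\ref{lem:nocause} and~\ref{lem:iscause} to get that $\mathsf{D}$ is itself a cause, and then run essentially the same intersection argument with $\mathsf{D}$ playing the role of the second cause. The core mechanism is identical; what you gain is the explicit identification $\Cause = \mathsf{D}$ for every cause, which is a slightly stronger statement (though it also follows a posteriori from the paper's uniqueness plus Lemma~\ref{lem:iscause}). What you pay is the dependency on two earlier lemmas, whereas the paper's proof stands on its own. One small simplification available to you: in the CF case ``$\pi_0 \in \overline{\Cause}$'', the witness $\pi_1$ produced by CF for $\Cause$ is already in $\overline{\Cause} \subseteq \overline{\mathsf{D} \cap \Cause}$, so the extra step showing $\pi_1 \in \overline{\mathsf{D}}$ is not needed.
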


\begin{remark}
This does \emph{not} mean that there can only exist a single causal event, such as ``$a$ at position 0" or ``$b$ at position 1", in a given scenario. Instead, Lemma~\ref{prop:uniqueness} states that the semantics of the symbolic description of the causal behavior in a given scenario is unique. It is precisely the idea of temporal causality to encompass multiple single events in a single symbolic description, e.g., through a conjunction such as $a \land \LTLnext b$.
\end{remark}

\subsection{Cause-Synthesis Algorithm for $\omega$-Regular Effects}\label{subsec:integralautomaton}

In Section~\ref{subsec:characterization}, we have established a direct characterization of causes as downward closed sets, independent of any concrete descriptions of cause, effect, and trace. In this section, we develop an automata-based algorithm for synthesizing causes of $\omega$-regular effects given, e.g., by a nondeterministic Büchi automaton~(NBA), on lasso-shaped traces. We assume that the relation $\leq \ \subseteq (2^I)^\omega \times (2^I)^\omega \times (2^I)^\omega$ is definable by a relational $\omega$-regular property $\prop_\leq \subseteq (2^{I \times \{t_0,t_1,t_2\}})^\omega$, such that
$(\pi_0,\pi_1,\pi_2) \in \ \leq$ iff the zipped trace $\mathit{zip}(\pi_0,\pi_1,\pi_2)$ satisfies $\prop_\leq$. Note that this applies to all concrete similarity relations introduced in Section~\ref{sec:definitions}.  We show that under these assumptions, the set $\mathsf{D}$ from Section~\ref{subsec:characterization} can be constructed as an NBA. First, we construct an NBA for $\overline{\mathsf{D}}$ and subsequently complement it. This is necessary because we start out from an NBA representation for the effect, and assume the similarity relation to be given by an NBA as well. Since the NBAs acceptance condition is existential, we need the additional complementations to express the universal quantification over the closer traces $\sigma$ appearing in the definition of $\mathsf{D}$. 

The main technical difficulty that remains is to ensure that the conditions on the three traces $\pi_\mathit{actual}$, $\pi_\mathit{close}$ and $\pi_\mathit{far}$, as they appear in the alphabet of a similarity relation, are applied consistently, and that the quantification over $\sigma$ in $\mathsf{D}$, which corresponds to $\pi_\mathit{close}$, is resolved at the correct step, as the automaton should range over the inputs $I$ and not, e.g., over $I \times \{t_0,t_1,t_2\}$ as used by the similarity relation. 

\paragraph{Similarity Relation.} Our starting point is the NBA for the similarity relation defined by the $\omega$-regular property $\mathsf{P}_\leq$: $\mathcal{A}_{\leq}^I = (Q_{\leq},2^{I \times \{t_0,t_1,t_2\}}, Q_{\leq}^0, F_{\leq}, \Delta_{\leq}^I)$. The automaton $\mathcal{A}_{\leq}^I$ only reasons about inputs and uses tuples with the trace variables $t_0,t_1$ and $t_2$ to encode whether the input appears on the actual, closer or farther trace, respectively. We lift the automaton to the full set of atomic propositions as the automaton $\mathcal{A}_{\leq} = (Q_{\leq},2^{(I \times \{t_0,t_1,t_2\}) \cup (O \times \{t_0,t_1\})}, Q_{\leq}^0, F_{\leq}, \Delta_{\leq})$. The transition relation is defined as follows, for a letter $w$: $q_2 \in \Delta_{\leq}(q_1,w) \text{ iff } q_2 \in \Delta_{\leq}^I\big(q_1,w \setminus (O \times \{t_0,t_1\})\big)$.
Hence, $\mathcal{A}_{\leq}$ specifies the same relation between the inputs of the three traces as $\mathcal{A}_{\leq}^I$, but allows arbitrary output behavior. Its alphabet does not contain outputs for $\pi_2$, as these traces eventually form the elements of the cause, which only ranges over the inputs.

\paragraph{Effect.} Next, we modify the NBA $\mathcal{A}_{\Effect}^* = (Q_{\Effect},2^\AP, q_{\Effect}, F_{\Effect}, \Delta_{\Effect}^*)$ for the $\omega$-regular effect $\Effect$ such that it refers to the closer trace $t_1$ and ranges over the same alphabet as $\mathcal{A}_{\leq}$. We obtain $\mathcal{A}_{\Effect} = (Q_{\Effect},2^{(I \times \{t_0,t_1,t_2\}) \cup (O \times \{t_0,t_1\})}, q_{\Effect}, $ $F_{\Effect}, \Delta_{\Effect})$ with:
\begin{align*}
&q_2 \in \Delta_{\Effect}\big(q_1,(w \times \{t_1\}) \cup X \cup Y\big)  \text{ iff} \\
q_2 \in \Delta_{\Effect}^*&(q_1,w) \land X \subseteq (\AP \times \{t_0\}) \land Y \subseteq (I \times \{t_2\})\enspace .
\end{align*}
Hence, $\mathcal{A}_{\Effect}$ restricts $\pi_1$ to be in $\Effect$ by restricting it to the transition relation of $\mathcal{A}_{\Effect}^*$, while allowing an arbitrary trace $\pi_0$ and arbitrary input sequence in $\pi_2$.

\paragraph{Intersection.} For the conjunction that defines the set $\overline{\mathsf{D}}$, we intersect $\mathcal{A}_{\leq}$ with the complement of $\mathcal{A}_{\Effect}$ to obtain $\mathcal{A}_{\cap} = (Q_{\cap},2^{(I \times \{t_0,t_1,t_2\}) \cup (O \times \{t_0,t_1\})}, Q_{\cap}^0, F_{\cap}, \Delta_{\cap})$ such that: $\mathcal{A}_{\cap} = \mathcal{A}_{\leq} \cap \overline{\mathcal{A}_{\Effect}}$.

\paragraph{System Product.}  As the next step, we construct the product of the automaton $\mathcal{A}_{\cap}$ with the system $\mathcal{T} = (S,s_0,\AP,\delta,l)$, ensuring that the atomic propositions $t_1$ are picked from a valid system trace. When building the product, we project away explicit atomic propositions paired with $t_1$, as the traces of the desired set $\mathsf{D}$ are only the traces paired with $t_2$.  The resulting automaton is $\mathcal{A}_{\times} = (S \times Q_{\cap},$ $ 2^{( I \times \{t_0,t_2\}) \cup (O \times \{t_0\})}, \{s_0\} \times Q_{\cap}^0, S \times F_{\cap}, \Delta_\times)$, where
\begin{align*}
\Delta_\times \big((s_i,q_i),w\big) = \big\{(s_{i+1},q_{i+1})\mid \ &\exists A \subseteq I . \ s_{i+1} \in \delta(s_i,A) \, \land \\
&q_{i+1} \in \Delta_{\cap}\big(q_i,((l(s_{i+1}) \cup A) \times \{t_1\})\cup w\big)\big\} \enspace .
\end{align*}

\paragraph{Cause Automaton.} To obtain the final result, we first complement the automaton from the previous step to obtain $\overline{\mathcal{A}_{\times}} = (Q_{\times},2^{( I \times \{t_0,t_2\}) \cup (O \times \{t_0\})}, Q_{\times}^0, F_{\times}, \Delta_\times)$, and then build the product with the trace. At the same step we project away atomic propositions paired with $t_0$, and remove the trace variable $t_2$ to obtain the alphabet $2^I$ for the cause.
For the lasso-shaped trace $\pi  = \pi_0 \ldots \pi_{j-1} \cdot (\pi_{j} \ldots \pi_k)^\omega$ we define the set of positions as $\Pi = \{0,\ldots,k\}$ and a successor function $\mathit{succ}: \Pi \mapsto \Pi$ as $\mathit{succ}(r) = r+1$ for $r < k$, and $\mathit{succ}(k) = j$. The cause automaton is then $\mathcal{A}_{\mathsf{D}} = (\Pi \times Q_{\times},2^I, \{\pi_0\} \times Q_{\times}^0, \Pi \times F_{\times}, \Delta_\mathsf{D})$, where
\begin{align*}
\Delta_\mathsf{D} \big((i,q_i),w\big) = \big\{(\mathit{succ}(i),q_{i+1})\mid \  q_{i+1} \in \Delta_{\times}\big(q_i,(\pi_i \times \{t_0\}) \cup (w \times \{t_2\})\big)\big\} \enspace .
\end{align*}

From the lemmas established in Section~\ref{subsec:characterization}, we conclude that there is a cause iff $\mathcal{A}_{\mathsf{D}}$ is non-empty, and then the cause is uniquely determined by its language. 

\begin{corollary}
    The language of $\mathcal{A}_\mathsf{D}$ is empty iff there is no cause $\Cause$ for $\Effect$ on $\pi$ in $\mathcal{T}$, and if $\Lang(\mathcal{A}_\mathsf{D})$ is non-empty, then it is the unique cause for $\Effect$ on $\pi$ in $\mathcal{T}$.
\end{corollary}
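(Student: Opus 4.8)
The plan is to reduce the whole corollary to the single semantic identity $\Lang(\mathcal{A}_\mathsf{D}) = \mathsf{D}$, where $\mathsf{D}$ is the set fixed at the start of Section~\ref{subsec:characterization}, and then to read the statement off from the lemmas already established there. Granting this identity, the corollary is immediate: if $\Lang(\mathcal{A}_\mathsf{D})$ is empty then $\mathsf{D} = \emptyset$, so by Lemma~\ref{lem:nocause} no candidate satisfies \textbf{SAT}, and since \textbf{SAT} is required of every cause, no cause exists; conversely, if no cause exists then in particular $\mathsf{D}$ is not a cause, so by the contrapositive of Lemma~\ref{lem:iscause} $\mathsf{D} = \emptyset$, i.e.\ $\Lang(\mathcal{A}_\mathsf{D})$ is empty. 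Finally, if $\Lang(\mathcal{A}_\mathsf{D})$ is non-empty, then by Lemma~\ref{lem:iscause} $\mathsf{D}$ is a cause for $\Effect$ on $\pi$ in $\mathcal{T}$, and by Lemma~\ref{prop:uniqueness} it is the only one, so $\Lang(\mathcal{A}_\mathsf{D}) = \mathsf{D}$ is exactly the unique cause.

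To establish $\Lang(\mathcal{A}_\mathsf{D}) = \mathsf{D}$ I would track the recognized language through the construction, automaton by automaton, always over the respective (projected) alphabet. Starting from the assumption $\Lang(\mathcal{A}^I_\leq) = \prop_\leq$, the lifting to $\mathcal{A}_\leq$ gives that a word over $2^{(I \times \{t_0,t_1,t_2\}) \cup (O \times \{t_0,t_1\})}$ is accepted iff the triple of input projections of its $t_0$-, $t_1$- and $t_2$-components lies in $\leq$, with outputs unconstrained; intersecting with $\overline{\mathcal{A}_{\Effect}}$ additionally forces the $t_1$-component to lie in $\overline{\Effect}$, using soundness of NBA complementation and of the product construction. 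The system product then restricts the $t_1$-component to be a genuine trace of $\mathcal{T}$ and projects it away, so that $\Lang(\mathcal{A}_\times)$ is precisely the set of words over $2^{(I \times \{t_0,t_2\}) \cup (O \times \{t_0\})}$ for which there exists a $\sigma \in \traces(\mathcal{T})$ with $\sigma \in \overline{\Effect}$ and the triple consisting of the $t_0$-component, $\sigma$, and the $t_2$-component (restricted to inputs) lying in $\leq$; this is exactly the set $\overline{\mathsf{D}}$ with the $t_0$-component not yet instantiated. Complementing turns the existential over $\sigma$ into the universal quantifier of $\mathsf{D}$, and the final product with the lasso trace $\pi$ pins the $t_0$-component to $\pi$ and renames $t_2$ to plain $I$; the successor function $\mathit{succ}$ guarantees that the $\Pi$-component traverses precisely the positions of $\pi$ together with its cyclic continuation, so the resulting language is $\{\,\rho \in (2^I)^\omega \mid \forall \sigma \in \traces(\mathcal{T}).\ \sigma \leq_\pi \rho \rightarrow \sigma \in \Effect\,\}$, which is $\mathsf{D}$ by definition.

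The main obstacle I expect is bookkeeping rather than a conceptual hurdle: one must check that the existential quantifier over the ``closer'' witness $\sigma$ (carried by the $t_1$-coordinate) is discharged exactly at the system-product step, i.e.\ \emph{before} the complementation, so that the subsequent complementation yields the universal quantifier appearing in $\mathsf{D}$, while the outputs of the $t_2$-coordinate are dropped there and never reintroduced, matching $\mathsf{D} \subseteq (2^I)^\omega$. One must also verify that instantiating the $t_0$-coordinate by $\pi$ commutes with the cyclic product structure, i.e.\ that a state $(\pi_i,q)$ is reachable along a run precisely in lockstep with reading the corresponding prefix of $\pi$, so that no spurious or missing $t_0$-labeling arises. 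A secondary point worth stating explicitly is that the standard NBA operations used throughout --- alphabet lifting, synchronous product and intersection, complementation, and projection of fixed or don't-care coordinates --- are all language-exact on $\omega$-words. Once $\Lang(\mathcal{A}_\mathsf{D}) = \mathsf{D}$ is in place, nothing further is needed beyond Lemmas~\ref{lem:iscause}, \ref{lem:nocause} and \ref{prop:uniqueness}.
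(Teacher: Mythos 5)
Your proposal is correct and follows essentially the same route as the paper: the corollary is obtained directly from Lemmas~\ref{lem:iscause}, \ref{lem:nocause} and \ref{prop:uniqueness} once $\Lang(\mathcal{A}_\mathsf{D}) = \mathsf{D}$ is granted, which the paper treats as holding by construction. Your step-by-step tracking of the language through the lifting, intersection, system product (discharging the existential over $\sigma$ before complementation), complementation, and trace product supplies exactly the bookkeeping the paper leaves implicit, and it is accurate.
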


We can also state an upper bound on the size of $\mathcal{A}_{\mathsf{D}}$, which is dominated by the potentially exponential growth from NBA complementation~\cite{Schewe09}. 

\begin{proposition}
    If the effect $\mathsf{E}$ and the similarity relation $\leq$ are given as NBAs $\mathcal{A}_\mathsf{E}$ and $\mathcal{A}_{\leq}$, respectively, then the size of $\mathcal{A}_\mathsf{D}$ is in $|\pi| \cdot 2^{\widetilde{\mathcal{O}}(2^{\widetilde{\mathcal{O}}(|\mathcal{A}_{\Effect}|)} \cdot |\mathcal{A}_{\leq}| \cdot |\mathcal{T}|)}$.
\end{proposition}

Note that the doubly-exponential upper bound in the description of $\Effect$ persists independent of whether it is given as an NBA or LTL formula. In the latter case, we simply translate the negated formula, which again leads to an exponential blow-up. In theory, the description does make a difference for $\leq$: If it is represented as a formula, we first need to translate it with a potentially exponential increase in size, hence it would move up one exponent in the bound. In practice, the canonical similarity relation $\leq^\mathit{subset}$ can always be represented by a 1-state NBA, such that its contribution to the bound is less relevant.

While the stated upper bound may seem daunting, it mirrors the (tight) bounds of related problems, such as LTL synthesis~\cite{PnueliR89}. In the following section, we show that, not only can our approach solve many cause-synthesis problems in practice, it also significantly improves upon previous methods for cause checking.

\section{Implementation and Evaluation}\label{sec:experiments}

In this section, we evaluate a prototype tool implementing our cause-synthesis approach, called CORP - \emph{Causes for Omega-Regular Properties}.\footnote{Our prototype is on GitHub: \url{https://github.com/reactive-systems/corp}. Our full evaluation can be reproduced with the artifact on Zenodo: \url{https://doi.org/10.5281/zenodo.10946309}.} Our prototype is written in Python and uses Spot~\cite{Duret-LutzRCRAS22} for automata operations and manipulation. The prototype allows for both cause synthesis and cause checking, where in the latter case the correct cause is first synthesized and than checked for equivalence against the cause candidate. This allows for a direct comparison with the cause checking tool CATS~\cite{BeutnerFFS23} in Section~\ref{sec:checkingexperiments}. Before, we report on our experiments on cause synthesis, where we compare our method with the incomplete, sketch-based approach of CATS. All experiments were carried out on a machine equipped with a 2.8 GHz Intel Xeon processor and 64GB of memory, running Ubuntu 22.04. 

\subsection{Cause Synthesis}\label{sec:synthesisexperiments}
We conducted three different experiments that highlight how the similarity relations, effect size and system size contribute to the performance of our algorithm.

\begin{table}[!t]
	\caption{Cause synthesis for arbiters. $|\mathcal{T}|$ is the number of system states.  In all instances, $\pi$ is the (unique) trace where all $n$ clients send a request at every position, which has length $|\pi| = n$. $\varphi_\Effect$ is the effect. We report the time taken to synthesize the causal NBA with the metrics $\leq^\mathit{full}$ and $\leq^\mathit{subset}$ in seconds and the respective NBA sizes $|\mathcal{A}_\Cause^\mathit{full}|$ and $|\mathcal{A}_\Cause^\mathit{subset}|$, and provide an LTL description $\varphi_\Cause$ of the NBA language (guessed manually). TO denotes the timeout of $60$ seconds.}\label{tab:arbiter}
	   \vspace{1.25em}
		\centering
		\def\arraystretch{1.1}
		\setlength\tabcolsep{3mm}
		\begin{booktabs}{lcccccccc}
			\toprule
			\textbf{Instance} & $\boldsymbol{|\mathcal{T}|}$ & $\boldsymbol{\varphi_\Effect}$  & $\boldsymbol{t(\leq^\mathit{full})}$  & $\boldsymbol{t(\leq^\mathit{subset})}$ & $\boldsymbol{|\mathcal{A}_\Cause^\mathit{full}|}$  & $\boldsymbol{|\mathcal{A}_\Cause^\mathit{subset}|}$ & $\boldsymbol{\varphi_\Cause}$ \\
			\midrule
			\textsc{Spurious 1} & 1 & $\LTLeventually g_0$ & 0.11 & 0.11 & 1 & 1 & $\true$\\
			\midrule[dotted]
			\textsc{Spurious 2} & 2 & $\LTLeventually g_0$& 0.11 & 0.11 & 1 & 1 & $\true$\\
			\midrule[dotted]
			\textsc{Spurious 3} & 3 & $\LTLeventually g_0$ & 0.21 & 0.11 & 1 & 1 & $\true$\\
			\midrule[dotted]
			\textsc{Spurious 4} & 4 & $\LTLeventually g_0$ & TO & 0.11 & - & 1 & $\true$\\
                \midrule
			\textsc{Unfair 2} & 2 & $\LTLglobally \lnot g_0$& 0.11 & 0.11 & 1 & 1 & $\LTLglobally r_\mathit{prio}$\\
			\midrule[dotted]
			\textsc{Unfair 3} & 4 & $\LTLglobally \lnot g_0$ & 0.16 & 0.11 & 1 & 1 & $\LTLglobally r_\mathit{prio}$\\
			\midrule[dotted]
			\textsc{Unfair 4} & 6 & $\LTLglobally \lnot g_0$ & TO & 0.11 & - & 1 & $\LTLglobally r_\mathit{prio}$\\
			\midrule
			\textsc{Full 1} & 1 & \makecell{$\LTLeventually g_0$ \\ $\LTLglobally \LTLeventually g_0$} & \makecell{0.11 \\ 0.11} & \makecell{0.11\\ 0.11} &  \makecell{2 \\ 2}  &  \makecell{2 \\ 2}  & \makecell{$\LTLeventually r_0$ \\ $\LTLglobally \LTLeventually r_0$} \\
			\midrule[dotted]
			\textsc{Full 2} & 4 & \makecell{$\LTLeventually g_0$ \\ $\LTLglobally \LTLeventually g_0$} & \makecell{0.11 \\ 0.11} & \makecell{0.11\\ 0.11} &  \makecell{2 \\ 12}  &  \makecell{2 \\ 4} & \makecell{$\LTLeventually r_0$ \\ $\LTLglobally \LTLeventually r_0$} \\
			\midrule[dotted]
			\textsc{Full 3} & 11 & \makecell{$\LTLeventually g_0$ \\ $\LTLglobally \LTLeventually g_0$} & \makecell{0.16 \\ 2.04} & \makecell{0.11\\ 0.16} &  \makecell{2 \\ 215}  &  \makecell{2 \\ 24}  & \makecell{$\LTLeventually r_0$ \\ $\LTLglobally \LTLeventually r_0$} \\
			\midrule[dotted]
			\textsc{Full 4} & 46 & \makecell{$\LTLeventually g_0$ \\ $\LTLglobally \LTLeventually g_0$} & \makecell{TO \\ TO} & \makecell{0.11\\ 33.22} &  \makecell{- \\ -}  &  \makecell{2 \\ 214} & \makecell{$\LTLeventually r_0$ \\ $\LTLglobally \LTLeventually r_0$} \\   
			\bottomrule
		\end{booktabs}
\end{table}

\subsubsection{Arbiters.} 
We computed causes on traces of resource arbiters to compare the performance of our algorithm under different similarity relations, whose logical description scales in the number of system inputs. An arbiter instance is parameterized by a number of clients $n$, each with its own input. This let us easily scale the size of the similarity relation's description. For some number $n$ of clients (indexed by $k$) that request access to a shared resource with a request $r_k$, an arbiter grants mutually exclusive access to the resource with a grant $g_k$. We considered different arbiter strategies, and for each we synthesize causes as NBAs $\mathcal{A}_\Cause^\mathit{full}$ and $\mathcal{A}_\Cause^\mathit{subset}$ with the similarity relations $\leq^\mathit{full}$ and $\leq^\mathit{subset}$, respectively. The results of these instances are depicted in Table~\ref{tab:arbiter}. Spurious arbiters simply give out grants to all clients in a round-robin manner, regardless of previous requests. Unfair arbiters prioritize one client with request $r_\mathit{prio}$ over the others, while full arbiters are fully functional arbiters that only give out grants that were requested beforehand. In all instances, we computed causes on the (unique) trace $\pi$ where all clients send requests continuously, i.e., $\pi|_I = \{r_0,\ldots,r_n\}^\omega$. Consequently, on this trace both the spurious and the full arbiter send grants to all clients, while the unfair arbiter only gives grants to the prioritized client. These varying strategies are reflected in the synthesized cause-effect pairs. In the spurious arbiters, the language of the synthesized cause NBA for the effect $\LTLeventually g_0$ is $\true$, which reflects that the effect appears on \emph{all} system traces. In the unfair arbiters, the cause for no grant being given to client $0$ is that the prioritized arbiter sends requests permanently, i.e., the causal NBA has the language $\LTLglobally r_\mathit{prio}$. In the full arbiters, $\LTLeventually g_0$ is caused, as expected, by $\LTLeventually r_0$ and $\LTLglobally \LTLeventually g_0$ is caused by $\LTLglobally \LTLeventually r_0$. From a performance standpoint, the arbiter instances show us that accommodating the canonical similarity relation $\leq^\mathit{subset}$, as we did through our generalization of temporal causality in Section~\ref{sec:definitions}, leads to significant improvements in practice: In all instances, synthesizing causes with $\leq^\mathit{subset}$ was faster than with $\leq^\mathit{full}$, and the resulting causal NBAs were smaller as well. This is mostly because of the number of inputs involved: The other parameters stay comparably small when going from the spurious 1-arbiter to the spurious 4-arbiter, but the latter times out when using $\leq^\mathit{full}$. When the systems get larger and the effects more complex, e.g., in the instance of the full 4-arbiter with the effect $\LTLglobally \LTLeventually g_0$, the automata produced can become bigger even with $\leq^\mathit{subset}$. However, the language of the produced automata has a small representation, i.e., $\LTLglobally \LTLeventually r_0$, such that we see potential for improvement through automata minimization techniques.

\begin{wrapfigure}{R}{0.5\linewidth}
        \vspace{-1.8em}
	\centering 
        \includegraphics[width=0.5\textwidth]{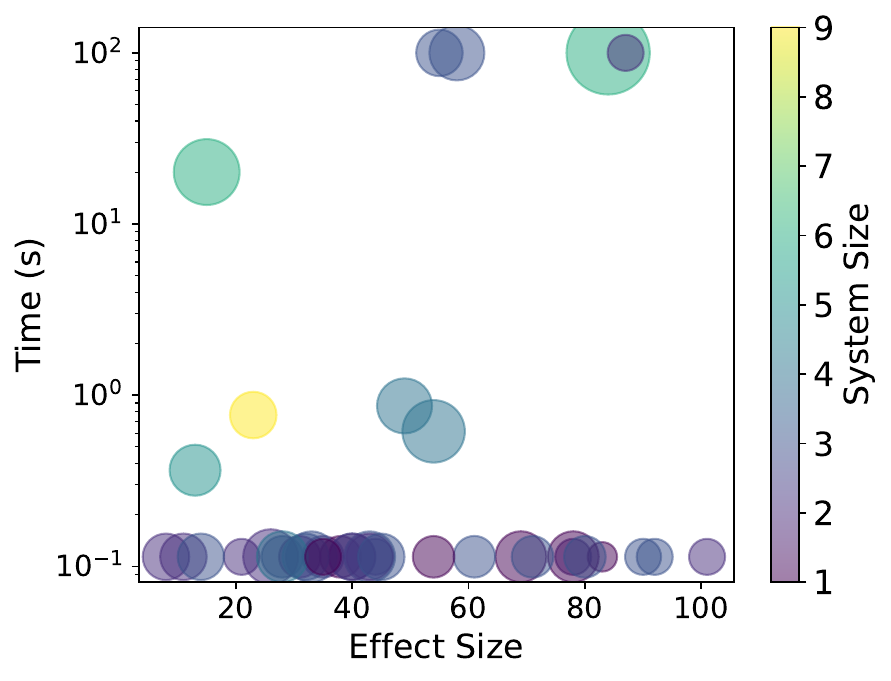}
        \caption{Computing causes for neural synthesis mispredictions with CORP. Size of a point represents the length of the counterexample (between 2 and 16).}\label{fig:neurosynth}
        \vspace{-1.5em}
\end{wrapfigure}

\subsubsection{Neural Synthesis.} For more diverse effects, we considered mispredicted circuits from a neural synthesis model~\cite{SchmittHRF21}. Given some specification (in this case, generated by Spot's \texttt{randltl}) the neural model predicts an implementation as an AIGER~\cite{Biere-FMV-TR-07-1} circuit, which is in the end model-checked against the specification. Since neural synthesis is not sound, this check fails occasionally and returns a counterexample, which may be used for further repair~\cite{CoslerSHF23}. We used our tool CORP to compute the cause for the violation of the specification on such a counterexample. 
In Figure~\ref{fig:neurosynth} we report the time of computing causes with respect to
size of the syntax tree of the effect formula, and the system size. The timeout was set to 100 seconds. The size of the points in the scatter plot corresponds to the length of the counterexample and the color to the system size. From the plot we can deduce that a large effect does not per-se mean a long runtime of our tool. However, a combination of large effects, bigger systems, and longer counterexamples usually means that the tool takes longer. The sizes of the synthesized causes are diverse and range from 2 to 60 states.

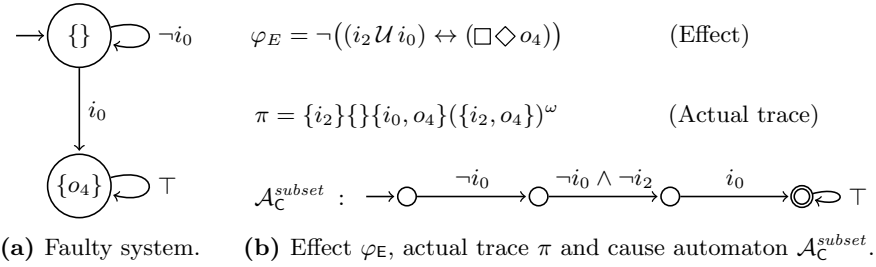
\begin{figure}[t!]
	\begin{subfigure}{0.3\linewidth}
		\centering
  \begin{tikzpicture}[->,shorten >=1pt,semithick,auto,node distance=2cm, on grid,initial text=,
			every state/.style={minimum size=24pt,inner sep=1pt}]
         \node[state,initial left](1){$\{\}$};
         \node[state,below = of 1](2){$\{o_4\}$};
         \path[->,draw](1) edge[loop right] node{$\lnot i_0$} (1)
         (1) edge[] node[pos=0.5]{$i_0$} (2)
         (2) edge[loop right] node{$\top$} (2);
         \end{tikzpicture}
         \caption{Faulty system.}\label{subfig:system}
	\end{subfigure}%
	\begin{subfigure}{0.7\linewidth}
		\centering
  \begin{tikzpicture}[->,shorten >=1pt,semithick,auto,node distance=1.75cm, on grid,initial text=,
			every state/.style={minimum size=7pt,inner sep=1pt}]
   
         \node[state,initial left](1){};
        \node[left = of 1,xshift=1em](trace){$\mathcal{A}_\Cause^\mathit{subset}\,$ :};
         \node[state,right = of 1](2){};
         \node[state,right = of 2](3){};
         \node[state,right = of 3,accepting](4){};
         \path[->,draw](1) edge[] node[pos=0.5]{$\lnot i_0$} (2)
         (2) edge[] node[pos=0.5]{$\lnot i_0 \land \lnot i_2$} (3)
         (3) edge[] node[pos=0.5]{$i_0$} (4)
         (4) edge[loop right] node[pos=0.5]{$\top$} (4);
         \node[above = of 1,yshift=-2.1em](trace){$\pi = \{i_2\}\{\}\{i_0,o_4\}(\{i_2,o_4\})^\omega$};
        \node[right = of trace,xshift=8.4em](trc){(Actual trace)};
         \node[above = of trace,yshift=-2.1em](effect){$\varphi_E = \lnot\big((i_2 \LTLuntil i_0 ) \leftrightarrow ( \LTLglobally \LTLeventually o_4 )\big)$};
        \node[right = of effect,xshift=7em](eff){(Effect)};
         \end{tikzpicture}
         \caption{Effect $\varphi_\Effect$, actual trace $\pi$ and cause automaton $\mathcal{A}_\Cause^\mathit{subset}$.}\label{subfig:rest}
	\end{subfigure}
	\caption{A system predicted wrongly by a neural synthesis model (Figure~\ref{subfig:system}) for the specification $\lnot \varphi_\Effect$, i.e., the negation of the effect. The effect is shown together with the actual trace $\pi$, i.e., a counterexample obtained from model checking, and the computed cause automaton $\mathcal{A}_\Cause^\mathit{subset}$ in Figure~\ref{subfig:rest}. }\label{fig:neuralsynthesis}
\end{figure}

\begin{example}
    We discuss an illustrative example of cause synthesis with a small benchmark from the neural synthesis datatset. All relevant inputs and outputs of our cause synthesis algorithm are depicted in Figure~\ref{fig:neuralsynthesis}. First, we have the system (cf. Figure~\ref{subfig:system}), which is a wrongly predicted circuit of the neural synthesis model. This model tried to come up with a solution for the specification $(i_2 \LTLuntil i_0 ) \leftrightarrow ( \LTLglobally \LTLeventually o_4 )$, i.e., $o_4$ appears infinitely often if and only if input $i_2$ is enabled until input $i_0$ is enabled. The predicted system does not satisfy this specification, because there are cases where $\LTLglobally \LTLeventually o_4$ holds without the inputs meeting the required condition. Hence, model checking the specification returns a counterexample $\pi$ that violates the formula, which means the negated specification can be seen as an effect $\varphi_\Effect$ that is present on the counterexample $\pi$ (cf. Figure~\ref{subfig:rest}). Our algorithm then computes the cause for this effect, i.e., for the violation of the specification, on the counterexample $\pi$, as a nondeterministic Büchi automaton. The computed automaton $\mathcal{A}_\Cause^\mathit{subset}$ is depicted at the bottom of Figure~\ref{subfig:rest}. It is language-equivalent to the LTL formula $\lnot i_0 \land \LTLnext (\lnot i_0 \land \lnot i_2 \land \LTLnext i_0)$, which basically states that the effect is caused by a conjunction of four inputs spread out over the first three steps. Indeed, it is easy to see that modifying any of these four inputs results in a trace that satisfies the specification: For instance, setting $i_0$ at the first position results in the trace that immediately enters the state labeled with $o_4$ and loops there forever such that the left part of the equivalence is satisfied, while removing $i_0$ from the third position results in looping in the initial state such that the right part of the equivalence is not satisfied anymore.
\end{example}

\subsubsection{Comparison with Cause Sketching.} CATS, the tool of Beutner et al.~\cite{BeutnerFFS23}, allows to enumerate non-temporal formulas in holes of a provided cause sketch until a cause is found. If the effect contains $\LTLnext$ as the only temporal operator and a cause exists, there is a sketch that is guaranteed to encompass the cause. This provides us with a baseline with which we can compare our cause-synthesis algorithm. We constructed random benchmarks that fall into CATS' complete fragment using Spot's \texttt{randaut} function to generate systems with $10$ up to $1000$ states, obtaining traces of length $2$ and then inserting a new atomic proposition $e$ at the last position of the trace and in the system. The effect then is defined as the occurrence of $e$ at exactly this position. We chose such small traces and effects because CATS timed out already on slightly larger instances. We conducted additional experiments using just our tool CORP with traces (and effects) of size 10. Figure~\ref{fig:sketching} shows the time taken by CATS and CORP to synthesize causes. The influence of the system size on the runtime of CORP in this setting is negligible, which we believe is due to the efficient automata operations performed by Spot. The hyperproperty encoding of CATS does not seem as amenable to similar optimizations.

\begin{figure}[t!]
	\caption{Direct comparisons between our tool CORP and the tool CATS~\cite{BeutnerFFS23}. Figure~\ref{fig:sketching} shows the time CATS needs to synthesize a cause in its complete fragment with trace and effect of size 2, and the time taken by CORP for sizes 2 and 10. Table~\ref{tab:checking} shows the time taken to check single causal relationships. These problems are taken from Beutner et al.~\cite{BeutnerFFS23} (where ``Instances" are ``Examples").}\label{tab:causechecking}
	\begin{subfigure}{0.5\linewidth}
		\centering
            \caption{Cause synthesis.}\label{fig:sketching}
            \vspace{-2pt}
            \includegraphics[width=\textwidth]{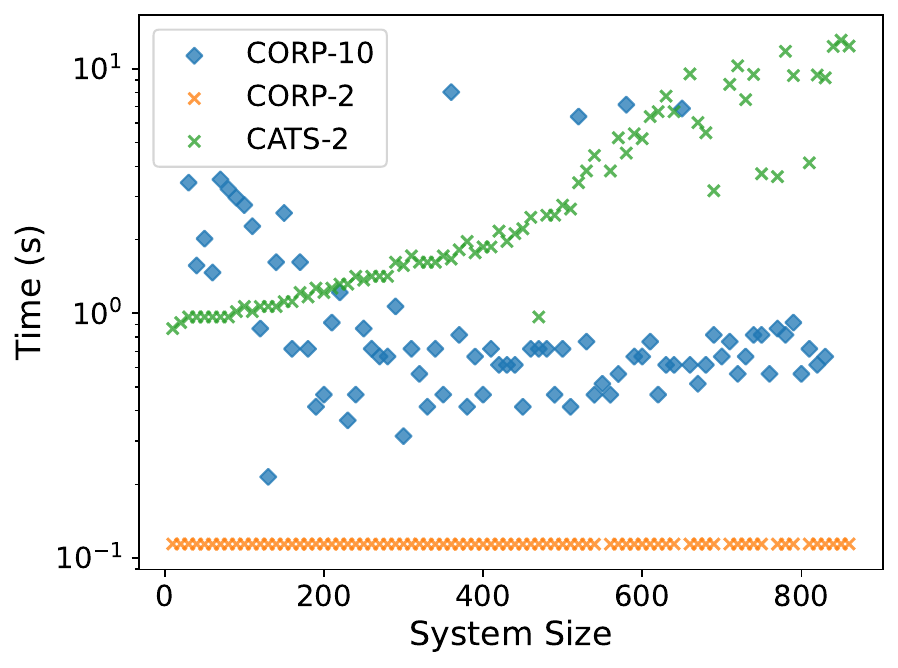}
	\end{subfigure}%
	\begin{subfigure}{0.5\linewidth}
		\centering
            \caption{Cause checking.}\label{tab:checking}
		\def\arraystretch{1.3}
		\setlength\tabcolsep{1mm}
		\setlength\dashlinedash{1pt}
		\setlength\dashlinegap{2pt}
		\setlength\arrayrulewidth{0.7pt}
		\begin{tabular}{lcc}
			\toprule
			\textbf{Instance} & CATS & CORP \\
			\midrule
			\textsc{Spurious Arbiter}  & 0.72 & 0.11 \\
			\textsc{Simple Arbiter} & 0.97 & 0.11 \\
			\textsc{Arbiter} & 22.84 & 0.11 \\
			\textsc{Instance 6, odd} & 1.81 & 0.11 \\
			\textsc{Instance 8} & 0.92 & 0.11 \\
			\textsc{TP Right} & 0.87 & 0.11 \\
			\bottomrule
		\end{tabular}
	\end{subfigure}
\end{figure}

\subsection{Cause Checking}\label{sec:checkingexperiments}

It is straightforward to use our cause synthesis algorithm to also check causes through an equivalence check between the synthesized causal NBA and the candidate formula (or automaton). This allows a direct performance comparison with the cause checking tool CATS of Beutner et al.~\cite{BeutnerFFS23}, which we conducted on the publicly available benchmarks of their paper. In these cause-checking benchmarks, a cause candidate is given in addition to the system, actual trace and effect. The time CATS and our tool CORP took in each instance to check whether the given candidate is a cause is depicted in Figure~\ref{tab:checking}. Somewhat surprisingly, our cause checker based on cause synthesis performs significantly better on all benchmarks. This shows that our characterization of causes as complements of the upward closure of the negated effect (cf. Section~\ref{subsec:characterization}) is more efficient than encoding the cause-checking instances into a hyperlogic, as done by CATS.

\section{Related Work}
The study of causality and its applications in formal methods has gained great interest in recent years~\cite{BaierDFJMPZ21}. In a finite setting, Ibrahim et al.\ use SAT solvers and linear programming to check~\cite{IbrahimRP2019} and infer~\cite{IbrahimP20} actual causes. Our definition of actual causality for reactive systems extends the definitions of Coenen et al.~\cite{CoenenFFHMS22} to cases in which the limit assumption does not hold. While Coenen et al.\ study the theory of actual causality~\cite{Halpern15} in reactive systems, they do not provide a way to \emph{generate} causes and explanations. 
In terms of cause \emph{synthesis}, the most related work is by Beutner et al.~\cite{BeutnerFFS23}, which checks causality and generates causes based on sketching. Unlike ours, their tool is only applicable for the small fragment of LTL containing only $\LTLnext$ operators, while we are able to generate temporal causes for all $\omega$-regular specifications.  

In a series of works, Leue et al.\  
study symbolic description of counterfactual causes in \emph{Event Order Logic}~\cite{CaltaisGL18,Leitner-FischerL13,Leitner-FischerL14}. 
However, this logic can only reason about the ordering of events, and not their absolute timing, as we can do with $\omega$-regular properties (e.g., specifying that the input at the second position is the cause). 

Gössler and Métayer~\cite{GosslerM13} define causality for component-based systems, and Gössler and Stefani~\cite{GosslerS20} study causality based on counterfactual builders. Their formalisms differ from ours, which is based on Coenen et al.~\cite{CoenenFFHMS22}, and none of the works considers cause \emph{synthesis}.

 Most other works related to cause synthesis concern generating explanations for effects observed on {finite} traces~\cite{BallNR03,GroceCKS06,Groce04,WangYIG06}, or effects restricted to safety properties~\cite{ParreauxPB23}.
 In the context of cause synthesis over infinite traces for effects given as temporal specifications, 
 existing works are limited to causes given as sets of events (i.e., atomic propositions and times points)~\cite{BeerBCOT09,CoenenDFFHHMS22,HorakCMHFMDFD22} or take a state-centric view to, e.g., measure the responsibility of a state for an observed effect property~\cite{DBLP:conf/aaai/BaierBK0P24,BaierDFJPZ22,DBLP:conf/lics/MascleBFJK21}.

\section{Conclusion}
This paper presents the first complete algorithm to compute temporal causes for arbitrary $\omega$-regular properties. It is based on a new, generalized version of temporal causality that solves a central dilemma of previous definitions by loosening the assumptions on similarity relations. From a philosophical perspective, this is an immense step forward since it is the first definition that accommodates the canonical similarity relation used in previous literature. Our experimental results show that our generalization also leads to significant improvements from a practical perspective. These mainly stem from characterizing causes based on set-closure properties, which may be an interesting approach for counterfactual causality in other formalisms. Besides, our work opens up exciting research directions on generating explanations from temporal causes, i.e., as formulas or annotations in highlighted counterexamples. 

\subsubsection{Acknowledgements.}

We thank Matthias Cosler and Frederik Schmitt for providing us with the neural synthesis benchmarks. This work was partially supported by the DFG in project 389792660 (Center for Perspicuous Systems, TRR 248) and by the ERC Grant HYPER (No. 101055412).

\bibliographystyle{splncs04}
\bibliography{bibliography}

\appendix
\section{Appendix}

\subsection{Additional Definitions}\label{app:cfautomaton}

\begin{definition}[Counterfactual Automaton~\cite{CoenenFFHMS22}]\label{def:CFautomaton}
Let $\mathcal{T} = (S, s_0, \AP, \delta, l)$ be a system and $\pi = \pi_0 \ldots \pi_i \cdot (\pi_{j} \ldots \pi_{k})^\omega \in \traces(\mathcal{T})$ be a lasso-shaped trace. The \emph{counterfactual automaton} for $\pi$ and $\mathcal{T}$ is $\mathcal{C}^\mathcal{T}_\pi = (S^\mathcal{C}, s^\mathcal{C}_0, I^\mathcal{C} \cup O, \delta^\mathcal{C}, l^\mathcal{C})$, where:
\begin{itemize}
    \item $S^\mathcal{C} = S \times \{0 \ldots k\}$, we have $k$ copies of the original system;
    \item $s^\mathcal{C}_0 = (s_0,0)$, paths start in the initial state of the first copy;
    \item $I^\mathcal{C} = I \cupdot \{o^{\mathcal{C}}~|~o \in O \}$, additional inputs for setting an output as contingency;
    \item $(s',n') \in \delta^\mathcal{C}((s,n),Y)$ iff the following holds:
    \begin{enumerate}
        \item\label{con:1} if $n = k$ then  $n'=j$ else $n' = n+1$, and
        \item\label{con:2} there is some $s'' \in \delta(s,Y|_I)$ such that for all $o^{\mathcal{C}} \in Y$: $o \in l(s') \leftrightarrow o \in \pi_n$ and for all $o^{\mathcal{C}} \not\in Y$: $o \in l(s') \leftrightarrow o \in l(s'')$;
    \end{enumerate}
    \item $l^\mathcal{C}((s,k)) = l(s)$, the labeling function is based on the original states.
\end{itemize}
\end{definition}

\subsection{Omitted Proofs}\label{app:proofs}

\setcounter{proposition}{0} 
\setcounter{theorem}{0} 
\setcounter{lemma}{0} 

\begin{proposition}
    $\leq^{\mathit{subset}}$ does not satisfy the limit assumption.
\end{proposition}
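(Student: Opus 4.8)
The plan is to exhibit a single actual trace $\pi$ and a single candidate cause $\Cause$ such that $(\overline{\Cause}, <_\pi)$ contains an infinite descending chain under $\leq^{\mathit{subset}}$. The example is essentially handed to us already in Section~\ref{subsec:limit}: take a single input $x$ (so $I = \{x\}$), the actual trace $\pi = \{x\}^\omega$, and consider the candidate cause $\Cause$ whose complement is $\overline{\Cause} = \Lang(\LTLeventually\LTLglobally\lnot x)$, i.e.\ $\Cause = \Lang(\LTLglobally\LTLeventually x)$ restricted to inputs. (Strictly we should check $\Cause$ is a legitimate candidate, but the definition of the limit assumption quantifies over \emph{all} $\Cause \subseteq (2^I)^\omega$, so any subset works.) First I would fix the family of traces $\rho_n = (\{x\})^n (\{\})^\omega$ for $n \in \mathbb{N}$, each of which lies in $\overline{\Cause}$ since each eventually stabilizes at $\{\}$, hence satisfies $\LTLeventually\LTLglobally\lnot x$.

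The key step is then to compute the changes sets and verify the strict descending chain. Using the notation of Example~\ref{ex:subsetdis}, $\mathit{changes}(\pi,\rho_n) = \{(x,i) \mid i \geq n\}$, because $\rho_n$ agrees with $\pi = \{x\}^\omega$ exactly on the first $n$ positions and differs (on $x$) at every position from $n$ onwards. Therefore $\mathit{changes}(\pi,\rho_{n+1}) = \{(x,i)\mid i\geq n+1\} \subsetneq \{(x,i)\mid i\geq n\} = \mathit{changes}(\pi,\rho_n)$, which by the definition of $\leq^{\mathit{subset}}$ gives $\rho_{n+1} <_\pi \rho_n$: the inclusion yields $\rho_{n+1} \leq_\pi \rho_n$, and strictness follows because the reverse inclusion fails (the position $(x,n)$ witnesses a change for $\rho_n$ but not for $\rho_{n+1}$, so $\rho_n \not\leq_\pi \rho_{n+1}$). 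Hence $\rho_0 >_\pi \rho_1 >_\pi \rho_2 >_\pi \cdots$ is an infinite descending chain entirely inside $\overline{\Cause}$, so $(\overline{\Cause},<_\pi)$ is not well-founded and $\leq^{\mathit{subset}}$ violates the limit assumption.

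I expect no serious obstacle here; the only mild subtlety is being careful about the direction of the subset ordering (the relation orders $\pi_j \leq^{\mathit{subset}}_\pi \pi_k$ when the changes of $\pi_j$ form a \emph{subset} of those of $\pi_k$, so \emph{fewer} changes means \emph{closer}), and correspondingly that a descending chain under $<_\pi$ must have strictly \emph{shrinking} change sets — which is exactly what the $\rho_n$ provide. One should also double-check the edge bookkeeping for lasso shapes and the skipped initial-state label convention, but since we are working directly at the level of input sequences in $(2^I)^\omega$ and the limit assumption is stated purely in terms of $(\overline{\Cause},<_\pi)$, these do not interfere. A remark could note that the same construction works for any $I \supseteq \{x\}$ by keeping all other inputs constant, showing the failure is robust and not an artifact of a degenerate singleton alphabet.
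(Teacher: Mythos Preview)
Your proposal is correct and uses essentially the same counterexample as the paper (single input, actual trace $\{x\}^\omega$, cause $\Lang(\LTLglobally\LTLeventually x)$); the only cosmetic difference is that you exhibit the explicit chain $\rho_n = \{x\}^n\{\}^\omega$, whereas the paper argues that every $\pi_1 \in \overline{\Cause}$ admits a strictly closer $\pi_2 \in \overline{\Cause}$ (obtained by restoring $x$ at the first position where it is missing), which amounts to the same thing.
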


\begin{proof}
    We generalize from the example of Coenen et al.~\cite{CoenenFFHMS22}. Let $I = \{a\}$ and consider the actual trace $\pi = \{a\}^\omega$ and $\Cause = \LTLglobally \LTLeventually a$. Let $\pi_1$ be any trace in $\overline{\Cause} = \LTLeventually \LTLglobally \lnot a$, consequently there exists some $i$ such that for all $j \geq i : a \not\in \pi_1[j]$. Then we can construct a closer trace $\pi_2$ with $a \in \pi_2[i]$ and $\pi_2[k] = \pi_1[k]$ for all $k \neq i$. Thus we have such a $\pi_2 \in \overline{\Cause}$ with $\pi_2 <_\pi \pi_1$ for any $\pi_1 \in \overline{\Cause}$, which means we cannot pick a $\pi_1$ for any $\pi_0 \in \overline{\Cause}$ as required by the limit assumption.\qed
\end{proof}

\begin{proposition}
    $\leq^{\mathit{full}}$ satisfies the limit assumption.
\end{proposition}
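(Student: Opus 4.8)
The plan is to prove the slightly stronger statement that $\leq^{\mathit{full}}_\pi$ is well-founded on \emph{all} of $(2^I)^\omega$, which immediately yields the limit assumption for every choice of $\pi$ and $\Cause$. I would argue by contradiction: assume $\pi_0 >_\pi \pi_1 >_\pi \pi_2 >_\pi \cdots$ is an infinite strictly descending chain under $\leq^{\mathit{full}}_\pi$. For each input $i \in I$ and index $n$, set $C_n^i = \{\, t \in \mathbb{N} \mid \pi[t] \neq_{\{i\}} \pi_n[t] \,\}$, the set of positions at which $\pi_n$ changes input $i$ relative to $\pi$; observe that $\pi_n$ (as an element of $(2^I)^\omega$) is completely determined by $\pi|_I$ together with the family $(C_n^i)_{i \in I}$.

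The first step is to note that the $\leq^{\mathit{subset}}$-conjunct of $\leq^{\mathit{full}}$ forces $C_{n+1}^i \subseteq C_n^i$ for all $i,n$, so each $(C_n^i)_n$ is a non-increasing sequence of subsets of $\mathbb{N}$. If every one of these sequences were eventually constant, then — since $I$ is finite — there would be an $N$ past which the whole family $(C_n^i)_{i\in I}$ is constant, hence $\pi_n = \pi_N$ for all $n \ge N$, contradicting strictness of the chain. So it suffices to show that each $(C_n^i)_n$ stabilizes.

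The crux is the second step. Fix $i$ and suppose $(C_n^i)_n$ is \emph{not} eventually constant, i.e.\ it strictly decreases infinitely often; then no $C_n^i$ can be finite, since a finite set admits only finitely many strict shrinkings, so $C_n^i$ is infinite for every $n$. This is where the extra conjunct of $\leq^{\mathit{full}}$ enters: along the chain, in the comparison $\pi_{n+1} \leq^{\mathit{full}}_\pi \pi_n$ the closer trace $\pi_{n+1}$ plays the role of $\pi_\mathit{close}$ and $\pi_n$ that of $\pi_\mathit{far}$; since $C_{n+1}^i$ is infinite, the premise $\G\F(i_\mathit{actual} \not\leftrightarrow i_\mathit{close})$ of the $i$-th implication holds, so $\G(i_\mathit{close} \leftrightarrow i_\mathit{far})$ holds, i.e.\ $\pi_{n+1}$ and $\pi_n$ agree on $i$ at \emph{every} position. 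Hence $C_n^i = C_{n+1}^i$ for all $n$, contradicting that $(C_n^i)_n$ strictly decreases infinitely often. Therefore each $(C_n^i)_n$ is eventually constant, closing the argument.

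I expect the main obstacle to be purely a matter of careful bookkeeping rather than deep content: one must pin down the direction of the similarity relation (which zipped component is $\pi_\mathit{close}$ and which is $\pi_\mathit{far}$ as one moves down a descending chain), and make explicit why the "finitely many changes" case cannot sustain an infinite strict descent, so that the "infinitely many changes" case is the only one where the new conjunct must do work. The genuine mathematical idea is exactly that step — the added temporal conjunct converts any input carrying infinitely many changes into one that is already frozen between consecutive chain members, and finiteness of $I$ then caps the chain length.
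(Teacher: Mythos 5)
Your proof is correct and takes essentially the same route as the paper's: contradiction via an infinite strictly descending chain, subset-monotonicity of the change sets, the extra conjunct of $\leq^{\mathit{full}}$ freezing any input with infinitely many changes between consecutive chain elements, and finiteness (of $I$ and of finite change sets) to cap the strict descent. The only difference is bookkeeping: you stabilize the change sets input-by-input, whereas the paper first stabilizes the set of inputs carrying only finitely many changes and then derives the contradiction from the strictly shrinking finite remainder.
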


\begin{proof}
    Assume there were $\Cause$ and $\pi$, with an infinite descending chain $\pi_0 >^{\mathit{full}}_\pi \pi_1 >^{\mathit{full}}_\pi \ldots$ with $\pi_i \in \overline{\Cause}$ for all $i$ . Since $>^{\mathit{full}}_\pi$ is based on $>^{\mathit{subset}}_\pi$ and is irreflexive, we have that $\mathit{changes}(\pi_i,\pi) \supset \mathit{changes}(\pi_{i+1},\pi)$ for all $i$. Given $A \subseteq \AP$ and $X \subseteq \AP \times \mathbb{N}$, we define $X|_A = X \cap \{(a,i) \ | \ a \in A \land i \in \mathbb{N} \}$. For every $i$, let $F_i \subseteq I$ be the largest set such that $\mathit{changes}(\pi_i,\pi)|_{F_i}$ is finite, i.e., $F_i$ is the sets of inputs
on which $\pi_i$ and $\pi$ differ on finitely many positions. 
We have that 
$F_i \subseteq F_{i+1}$ for all $i$, as the other case contradicts $\mathit{changes}(\pi_i,\pi) \supset \mathit{changes}(\pi_{i+1},\pi)$. Since $I$ is finite, there can only be finitely many $i$ with $F_i \subset F_{i+1}$. Let $j = i+1$ for the largest such $i$, hence we have $F_k = F_{k+1}$ for all $k \geq j$. From the additional criterion of $<^\mathit{full}_\pi$ it follows that $\mathit{changes}(\pi_k,\pi)|_{I \setminus F_k} = \mathit{changes}(\pi_{k+1},\pi)|_{I \setminus F_{k+1}}$, hence we must have $\mathit{changes}(\pi_k,\pi)|_{F_k} \supset \mathit{changes}(\pi_{k+1},\pi)|_{F_{k+1}}$. Since $\mathit{changes}(\pi_k,\pi)|_{F_k}$ is finite by construction, we can have only a finite number of such $k$'s, which yields a contradiction with our starting chain being infinite.\qed
    
\end{proof}

\begin{lemma}
    Let $\mathcal{T}$ be a system, $\pi\in \mathit{traces}(\mathcal{T})$ a trace, $\Cause \subseteq (2^{I})^\omega$ a cause property, and $\Effect \subseteq (2^\AP)^\omega$ an effect property. Let $\leq$ be a similarity relation that satisfies the limit assumption. Then we have that \emph{PC2} is satisfied iff \emph{CF} is satisfied.
\end{lemma}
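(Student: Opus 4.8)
The plan is to prove the two implications separately, using the limit assumption to bridge between the "for all closest traces" quantification in PC2 and the "for all traces, there exists a closer trace" quantification in CF.

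For the direction PC2 $\Rightarrow$ CF, I would take an arbitrary $\pi_0 \in \overline{\Cause}$ and must produce an at-least-as-close $\pi_1 \in \overline{\Cause}$ together with a system trace $\pi_2 =_I \pi_1$ with $\pi_2 \in \overline{\Effect}$. The key move is to descend from $\pi_0$ within $\overline{\Cause}$: consider the set $\{\rho \in \overline{\Cause} \mid \rho \leq_\pi \pi_0\}$. By the limit assumption, $(\overline{\Cause}, <_\pi)$ is well-founded, so this set has a minimal element $\pi_1$ — and $\pi_1$ is in fact a \emph{closest} counterfactual trace in the sense of PC2 (if some $\rho \in \overline{\Cause}$ had $\rho <_\pi \pi_1$, then by transitivity $\rho <_\pi \pi_0$ and $\rho$ would sit below $\pi_1$ in the chosen set, contradicting minimality). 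Here I need $\leq_\pi$ to be a partial order (transitive), which is given. Now PC2 applies to $\pi_1$, yielding a system trace $\pi_2 =_I \pi_1$ with $\pi_2 \in \overline{\Effect}$; and $\pi_1 \leq_\pi \pi_0$ by construction, so CF holds for this $\pi_0$.

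For the direction CF $\Rightarrow$ PC2, take any closest counterfactual trace $\pi_0 \in \overline{\Cause}$, i.e., one with no $\pi_1 \in \overline{\Cause}$ satisfying $\pi_1 <_\pi \pi_0$. Applying CF to $\pi_0$ gives an at-least-as-close $\pi_1 \in \overline{\Cause}$ with $\pi_1 \leq_\pi \pi_0$ and a system trace $\pi_2 =_I \pi_1$, $\pi_2 \in \overline{\Effect}$. Since $\pi_0$ is closest and $\pi_1 \leq_\pi \pi_0$ with $\pi_1 \in \overline{\Cause}$, we cannot have $\pi_1 <_\pi \pi_0$, so by antisymmetry of the partial order $\pi_1 =_I \pi_0$ (recall distances and hence the order are measured only over inputs, so $\pi_1$ and $\pi_0$ coincide on $I$). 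Then $\pi_2 =_I \pi_1 =_I \pi_0$, so $\pi_2$ witnesses PC2 for $\pi_0$. This direction, interestingly, does not even need the limit assumption.

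The main obstacle I anticipate is bookkeeping around the fact that $\leq_\pi$ orders input sequences while PC2/CF also talk about full system traces $\pi_2 \in \traces(\mathcal{T})$: I must be careful that "$\pi_1 =_I \pi_0$" (from antisymmetry over inputs) is exactly what licenses transporting the witness $\pi_2$ from $\pi_1$ to $\pi_0$, and that the minimal element extracted via well-foundedness in the first direction genuinely matches PC2's notion of "closest counterfactual trace." Once the quantifier over $\pi_1$ is correctly understood as this input-equality slack, both directions are short; the well-foundedness appeal is the only nontrivial ingredient, and it is handed to us by the limit assumption hypothesis.
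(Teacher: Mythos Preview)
Your proposal is correct and follows essentially the same route as the paper: both directions hinge on descending via well-foundedness to a minimal element of $(\overline{\Cause},<_\pi)$ for PC2 $\Rightarrow$ CF, and on observing that a closest $\pi_0$ forces the CF-witness $\pi_1$ to coincide with $\pi_0$ on inputs for CF $\Rightarrow$ PC2. Your treatment is slightly more explicit about the well-foundedness descent and the input-level antisymmetry, and your remark that the second direction does not actually use the limit assumption is a correct observation the paper leaves implicit.
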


\begin{proof} Since PC2 and CF first quantify universally over $\overline{\Cause}$, it is easy to see that they are both satisfied in the case where $\Cause = (2^{I})^\omega$ and $\overline{\Cause}$ is, consequently, empty. We therefore in the following assume that $\overline{\Cause}$ is non-empty. Since $\leq$ satisfies the limit assumption, we have a non-empty set of closest counterfactual traces $C_\pi = \{\rho \in \overline{\Cause} \ | \ \forall \sigma \in \overline{\Cause}. \ \sigma \not<_\pi \rho \}$, i.e., all minimal elements in $(\overline{\Cause},<_\pi)$. We write $E(\pi_1)$ if $\pi_1$ has an input-equivalent trace avoiding the effect, i.e., $E(\pi_1)$ iff $ \exists \pi_2 \in \mathit{traces}(\mathcal{T}) . \ \pi_2 =_I \pi_1 \land \pi_2 \in \overline{\Effect}$. We now prove the directions of the equivalence separately.\\
``$\Rightarrow$" : Assume that PC2 is satisfied. Let $\pi_0 \in \overline{\Cause}$ be any trace not satisfying the cause. We know there is some $\pi_1 \in C_\pi$ with $\pi_1 \leq_\pi \pi_0$,
as either 
$\pi_0$ is a minimal element in $(\overline{\Cause},<_\pi)$, or a closer trace is. 
From PC2 it then follows that $E(\pi_1)$, implying CF.\\
``$\Leftarrow$" : Assume that CF is satisfied. Since $C_\pi \subseteq \overline{\Cause}$, for all $\pi_0 \in C_\pi$ there is a $\pi_1 \in \overline{\Cause}$ with $\pi_1 \leq_\pi \pi_0$ such that $E(\pi_1)$. Since $\pi_0$ is in $C_\pi$, it holds that $\pi_1 = \pi_0$. Hence, we have $E(\pi_0)$ for all $\pi_0 \in C_\pi$.\qed
\end{proof}

\begin{remark}
As in Definitions~\ref{def:gencausality} and~\ref{def:coenen} (cf. Remark~\ref{rem:cfautomaton}), the system $\mathcal{T}$ in the proof of Lemma~\ref{lem:cf} could easily be replaced by $\mathcal{C}^\mathcal{T}_\pi$ to search for a contingency trace that does not satisfy the effect, i.e., in the definition of $E(\rho)$. Hence, Lemma~\ref{lem:cf} and Theorem~\ref{thm:equiv} which we develop based upon it, hold regardless of whether contingencies are applied in the respective definitions.
\end{remark}

\begin{theorem}
    Let $\leq$ be a similarity relation that satisfies the limit assumption. Then $\Cause \subseteq (2^{I})^\omega$ is a cause for $\Effect \subseteq (2^O)^\omega$ on a trace $\pi$ that is deterministic in $\mathcal{T}$ according to our definition (Definition~\ref{def:gencausality}) if and only if it is a cause according to Coenen et al.'s definition (Definition~\ref{def:coenen}). 
\end{theorem}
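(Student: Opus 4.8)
The plan is to decompose the claimed equivalence into three matching pairs of conditions --- SAT versus PC1, CF versus PC2, and MIN versus PC3 --- and to observe that two of these pairs are already (essentially) in hand. The pair CF $\iff$ PC2 is exactly Lemma~\ref{lem:cf}, which holds for any similarity relation satisfying the limit assumption and for \emph{any} cause candidate, with no determinism hypothesis. So the only new work is to relate the satisfaction conditions and then lift everything to the minimality conditions. Note also that the restriction $\Effect \subseteq (2^O)^\omega$ in Definition~\ref{def:coenen} is simply the specialization of the more permissive $\Effect \subseteq (2^\AP)^\omega$ allowed by Definition~\ref{def:gencausality}, so the two definitions are being compared on a common domain, and I would make this explicit at the outset.

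First I would show SAT $\iff$ PC1 under the determinism hypothesis. PC1 asserts $\pi|_I \in \Cause$ and $\pi \in \Effect$, while SAT asserts $\pi_0|_I \in \Cause$ and $\pi_0 \in \Effect$ for every $\pi_0 \in \traces(\mathcal{T})$ with $\pi_0 =_I \pi$. Since $\pi$ is deterministic in $\mathcal{T}$, the set $\{\pi_0 \in \traces(\mathcal{T}) \mid \pi_0 =_I \pi\}$ equals $\{\pi\}$, so the universally quantified statement in SAT collapses to precisely PC1. Crucially, this argument never inspects the particular candidate $\Cause$, so it applies verbatim to every subset $\Cause' \subseteq \Cause$; likewise Lemma~\ref{lem:cf} applies to each such $\Cause'$. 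Hence, for every fixed $\Cause'$, the conjunction ``SAT and CF'' holds iff ``PC1 and PC2'' holds. Consequently the existence of a strict subset $\Cause' \subset \Cause$ witnessing non-minimality is equivalent on the two sides, i.e.\ MIN $\iff$ PC3, and combining the three pairs gives that $\Cause$ satisfies \{SAT, CF, MIN\} iff it satisfies \{PC1, PC2, PC3\}. I would also remark, following Remark~\ref{rem:cfautomaton}, that the only use of the system in CF and PC2 can equally be instantiated with the counterfactual automaton $\mathcal{C}^\mathcal{T}_\pi$, so the equivalence is insensitive to whether contingencies are enabled.

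The main obstacle is not mathematical depth but uniformity bookkeeping: one must be careful that the equivalences SAT $\iff$ PC1 and CF $\iff$ PC2 are genuinely parametric in the candidate cause, since otherwise they would not transfer through the quantifier over subsets inside MIN/PC3. Once that uniformity is nailed down, determinism of $\pi$ is used exactly once --- to collapse the universal quantifier of SAT to the single conjunct of PC1 --- which matches the surrounding discussion that Lemma~\ref{lem:cf} itself needs no determinism assumption and that the two definitions are designed to diverge precisely when the actual trace is nondeterministic.
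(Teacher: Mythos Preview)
Your proposal is correct and follows essentially the same approach as the paper: invoke Lemma~\ref{lem:cf} for CF $\iff$ PC2, use determinism of $\pi$ to collapse SAT to PC1, and conclude MIN $\iff$ PC3. If anything, you are slightly more careful than the paper, which simply asserts that ``MIN and PC3 state the same requirement'' without spelling out the uniformity-in-$\Cause'$ argument you give.
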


\begin{proof}
    With Lemma~\ref{lem:cf} at hand, we only need to show that SAT is satisfied in a scenario iff PC1 is satisfied, because MIN and PC3 state the same requirement.\\
    ``$\Rightarrow$" : This direction is trivial, as $\pi =_I \pi$.\\
    ``$\Leftarrow$" : Since we assume $\pi$ is deterministic in $\mathcal{T}$, we have for all $\pi_0 \in \traces(\mathcal{T})$ that $\pi_0 =_I \pi$ implies $\pi_0 = \pi$. From PC1 we know $\pi_0|_I \in \Cause$ and $\pi \in \Effect$. Hence, $\pi_0|_I \in \Cause$ and $\pi_0 \in \Effect$.\qed
\end{proof}

\begin{lemma}
If the set $\mathsf{D}$ is non-empty, it is a cause for $\Effect$ on $\pi$ in $\mathcal{T}$.
\end{lemma}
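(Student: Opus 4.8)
The plan is to verify the three conditions of Definition~\ref{def:gencausality} for $\mathsf{D}$ one by one, using only that $\leq_\pi$ is a partial order on $(2^I)^\omega$ (reflexivity, antisymmetry, transitivity) together with the fact that $\pi$ is a minimum of $\leq_\pi$, i.e.\ $\pi \leq_\pi \pi'$ for every $\pi'$. I expect the non-emptiness hypothesis to be used exactly once, for \textbf{SAT}; \textbf{CF} and \textbf{MIN} should follow from the definition of $\mathsf{D}$ alone. Throughout I would keep careful track of the input-restriction: traces in $\mathsf{D}$ and $\overline{\mathsf{D}}$ live in $(2^I)^\omega$, system traces in $(2^\AP)^\omega$, and $\leq_\pi$ only constrains inputs, so for $\rho \in (2^I)^\omega$ we have $\rho|_I = \rho$, and $\sigma =_I \rho$ iff $\sigma|_I = \rho$.

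For \textbf{SAT}, I would fix any $\pi_0 \in \traces(\mathcal{T})$ with $\pi_0 =_I \pi$ and, since distance is measured over inputs, reduce the goal to showing $\pi|_I \in \mathsf{D}$ and $\pi_0 \in \Effect$. To get $\pi|_I \in \mathsf{D}$, take any $\sigma \in \traces(\mathcal{T})$ with $\sigma \leq_\pi \pi|_I$; by minimality also $\pi|_I \leq_\pi \sigma$, so antisymmetry gives $\sigma =_I \pi$. Here I invoke non-emptiness: pick a witness $\rho \in \mathsf{D}$; then $\sigma \leq_\pi \pi|_I \leq_\pi \rho$, so $\sigma \leq_\pi \rho$ by transitivity, and the definition of $\mathsf{D}$ yields $\sigma \in \Effect$. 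Hence $\pi|_I \in \mathsf{D}$. Running the same argument with $\sigma := \pi_0$ (which satisfies $\pi_0 \leq_\pi \pi|_I$ by reflexivity on inputs, since $\pi_0|_I = \pi|_I$) gives $\pi_0 \in \Effect$, completing \textbf{SAT}. As a by-product this also shows $\pi \in \Effect$, which is not assumed separately.

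For \textbf{CF}, let $\pi_0 \in \overline{\mathsf{D}}$. By definition there is $\sigma \in \traces(\mathcal{T})$ with $\sigma \leq_\pi \pi_0$ and $\sigma \in \overline{\Effect}$. I would take $\pi_1 := \sigma|_I$ and $\pi_2 := \sigma$: then $\pi_1 \leq_\pi \pi_0$ (as $\leq_\pi$ only sees $\sigma|_I$), $\pi_1 \in \overline{\mathsf{D}}$ (witnessed by $\sigma$ again, using $\sigma \leq_\pi \sigma|_I$ by reflexivity), and $\pi_2 =_I \pi_1$ with $\pi_2 \in \overline{\Effect}$ — precisely the witnesses \textbf{CF} requires. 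For \textbf{MIN}, I would argue by contradiction: suppose $\mathsf{D}' \subsetneq \mathsf{D}$ satisfies \textbf{CF}, and pick $\rho \in \mathsf{D}\setminus \mathsf{D}' \subseteq \overline{\mathsf{D}'}$. Applying \textbf{CF} to $\mathsf{D}'$ at $\rho$ yields $\pi_1 \leq_\pi \rho$ and $\pi_2 \in \traces(\mathcal{T})$ with $\pi_2 =_I \pi_1$, $\pi_2 \in \overline{\Effect}$; then $\pi_2 \leq_\pi \pi_1 \leq_\pi \rho$, so $\pi_2 \leq_\pi \rho$, and since $\rho \in \mathsf{D}$ the definition of $\mathsf{D}$ forces $\pi_2 \in \Effect$, a contradiction. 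Thus no strict subset of $\mathsf{D}$ satisfies \textbf{CF}, hence none satisfies both \textbf{SAT} and \textbf{CF}.

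I do not anticipate a genuine obstacle: the proof is essentially an exercise in unfolding the closure definition. The only place demanding care is the role of non-emptiness in \textbf{SAT} — without a witness $\rho \in \mathsf{D}$ one cannot conclude that the system traces that collapse onto $\pi$ actually lie in $\Effect$, which is exactly why the lemma is conditional — and the minor bookkeeping of moving between $(2^I)^\omega$ and $(2^\AP)^\omega$ via $|_I$ and $=_I$.
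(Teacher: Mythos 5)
Your proof is correct and matches the paper's argument in all essentials: non-emptiness is used exactly once to establish \textbf{SAT} via a witness in $\mathsf{D}$ together with $\pi$ being the minimum of $\leq_\pi$, \textbf{CF} is discharged by taking $\pi_1,\pi_2$ from the trace $\sigma$ witnessing membership in $\overline{\mathsf{D}}$, and \textbf{MIN} is a contradiction pitting the definition of $\mathsf{D}$ against \textbf{CF} of a strict subset. The only cosmetic differences are your (unused) antisymmetry step in \textbf{SAT} and your slightly more explicit bookkeeping of the $|_I$ projections, e.g.\ checking $\sigma|_I \in \overline{\mathsf{D}}$ in \textbf{CF}, which the paper leaves implicit.
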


\begin{proof} Assume $\mathsf{D}$ is non-empty. We show that it satisfies SAT, CF, and MIN of Definition~\ref{def:gencausality}.

SAT: Consider $\pi' \in \mathsf{D}$. 
For all traces $\pi_0$ that are input equivalent to the actual trace, i.e., $\pi_0 =_I \pi$, we have that $\pi_0 \leq_\pi \pi'$ by definition.
Therefore, $\pi_0 \in \Effect$ for all such $\pi_0$. Any $\pi_1 \leq_\pi \pi_0$ must have $\pi_1 =_I \pi$ as $\pi$ is the unique minimum of $\leq_\pi$. It follows $\pi_1 \in \Effect$ for all such $\pi_1$, which means $\pi_0|_I \in \mathsf{D}$ for all $\pi_0 =_I \pi$. Hence, $\mathsf{D}$ satisfies SAT.

CF: If $\overline{\mathsf{D}}$ is empty, CF is trivially satisfied. So consider some $\pi_0 \in \overline{\mathsf{D}}$. By the definition of $\mathsf{D}$ it follows that there is some $\sigma \in \mathit{traces}(\mathcal{T})$ with $\sigma \leq_\pi \pi_0$ and $\sigma \notin \Effect$. Then we can pick $\pi_1 = \sigma$ for the CF condition with $\pi_2 = \pi_1$, which implies $\pi_2 =_I \pi_1$ and $\pi_2 \notin \Effect$. This shows that $\mathsf{D}$ satisfies CF.

MIN: For contradiction, assume there was a $\mathsf{D}' \subset \mathsf{D}$ that satisfies SAT and CF. Then there is some $\pi_0 \in  \overline{\mathsf{D}'}$ such that $\pi_0 \in \mathsf{D} \setminus \mathsf{D}'$. Since $\pi_0 \in \mathsf{D}$, for all $\pi_1 \in \mathit{traces}(\mathcal{T})$ we have $\pi_1 \leq_\pi \pi_0 \rightarrow \pi_1 \in \Effect$. But since $\pi_0 \in \overline{\mathsf{D}'}$ and $\mathsf{D}'$ satisfies CF, we have some $\pi_1 \leq_\pi \pi_0$ and $\pi_2 =_I \pi_1$, which together implies $\pi_2 \leq_\pi \pi_0$, with $\pi_2 \notin \Effect$, a contradiction. \qed

\end{proof}

\begin{lemma}
Iff the set $\mathsf{D}$ is empty, there exists no cause that satisfies SAT.
\end{lemma}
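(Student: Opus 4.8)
The plan is to reduce the statement to a purely effect-theoretic condition, namely whether every trace of $\mathcal{T}$ that is input-equivalent to $\pi$ satisfies $\Effect$. The key point is that criterion SAT of Definition~\ref{def:gencausality} is a conjunction, and its second conjunct --- ``$\pi_0 \in \Effect$ for all $\pi_0 \in \traces(\mathcal{T})$ with $\pi_0 =_I \pi$'' --- does not mention $\Cause$ at all. Consequently, if there is a trace $\pi_0 \in \traces(\mathcal{T})$ with $\pi_0 =_I \pi$ and $\pi_0 \notin \Effect$, then \emph{no} set $\Cause$ satisfies SAT, and in particular no cause exists. So for the direction ``$\mathsf{D} = \emptyset$ implies no cause'' it suffices to show that emptiness of $\mathsf{D}$ produces such a trace.

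The heart of the argument is therefore the following contrapositive: if every $\pi_0 \in \traces(\mathcal{T})$ with $\pi_0 =_I \pi$ satisfies $\pi_0 \in \Effect$, then $\pi|_I \in \mathsf{D}$, so $\mathsf{D} \neq \emptyset$. To see this, take any $\sigma \in \traces(\mathcal{T})$ with $\sigma \leq_\pi \pi|_I$. Since $\pi|_I$ is the minimum of the partial order $\leq_\pi$, we also have $\pi|_I \leq_\pi \sigma|_I$, so antisymmetry forces $\sigma|_I = \pi|_I$, i.e.\ $\sigma =_I \pi$; hence $\sigma \in \Effect$ by assumption. This verifies the defining condition of $\mathsf{D}$ for $\rho = \pi|_I$. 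As a sanity check, the converse is immediate: if such a bad trace $\pi_0$ exists then $\mathsf{D} = \emptyset$, since for any $\rho \in (2^I)^\omega$ minimality gives $\pi_0 \leq_\pi \rho$, which together with $\pi_0 \notin \Effect$ places $\rho$ in $\overline{\mathsf{D}}$.

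It remains to assemble the biconditional. One direction is: $\mathsf{D} = \emptyset$ yields, by the above, a bad input-equivalent trace, which makes SAT unsatisfiable, hence no cause exists. The other direction is its contrapositive, $\mathsf{D} \neq \emptyset \Rightarrow$ a cause exists, which is precisely Lemma~\ref{lem:iscause} (a non-empty $\mathsf{D}$ is itself a cause). I do not expect a genuine obstacle, as the proof is short; the only point requiring care is the tacit identification of a trace $\pi_0 \in (2^\AP)^\omega$ with its input projection $\pi_0|_I \in (2^I)^\omega$ before it is passed to $\leq_\pi$, together with the combined use of minimality and antisymmetry to rule out any trace lying strictly below --- or even on the same $\leq_\pi$-level as, up to $=_I$ --- the actual trace.
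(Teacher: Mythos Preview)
Your proposal is correct and follows the same line as the paper: both directions reduce to whether there exists a trace $\pi_0 \in \traces(\mathcal{T})$ with $\pi_0 =_I \pi$ and $\pi_0 \notin \Effect$, and both use minimality of $\pi$ in $\leq_\pi$ to tie this to emptiness of $\mathsf{D}$. The one difference is in closing the converse: the paper argues directly that if no $\Cause$ satisfies SAT then in particular $\Cause' = \{\pi|_I\}$ does not, producing such a bad $\pi_0$ and hence $\mathsf{D} = \emptyset$; you instead invoke Lemma~\ref{lem:iscause} to conclude that a non-empty $\mathsf{D}$ already satisfies SAT. Both are valid, but note that your own ingredients already yield a self-contained converse without Lemma~\ref{lem:iscause}: by the contrapositive of your item~3, $\mathsf{D} \neq \emptyset$ implies no bad trace, and then $\Cause = \{\pi|_I\}$ trivially satisfies SAT.
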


\begin{proof}
     ``$\Rightarrow$" : We show that if $\mathsf{D}$ is empty, SAT can never be satisfied. In this case, we in particular have $\pi|_I \notin \mathsf{D}$, which means there is some $\sigma \leq_\pi \pi$ with $\sigma \notin \Effect$. Since $\pi$ is the unique minimum of $\leq_\pi$, we must have $\sigma =_I \pi$. This means we have a spoiling $\pi_0 = \sigma$ with $\pi_0 \notin \Effect$ for all possible candidates $\Cause \subseteq (2^I)^\omega$, so no cause exists.\\
     ``$\Leftarrow$" : If SAT is not satisfied for \emph{any} cause $\Cause$, it is also not satisfied for $\Cause' = \{\pi|_I\}$, i.e., the cause that contains only the input sequence of the actual trace $\pi$. Since for all $\pi_0 =_I \pi$ we have $\pi_0 \in \Cause'$ by construction, there must exists some $\pi_0 =_I \pi$ with $\pi_0 \notin \Effect$. For contradiction, assume that $\mathsf{D}$ was non-empty and there is some $\sigma \in \mathsf{D}$ such that all $\sigma_0 \leq_\pi \sigma$ satisfy $\sigma_0 \in \Effect$. Then, we have in particular $\pi \leq_\pi \sigma$ by the definition of a similarity relation, and with $\pi_0 =_I \pi$ it follows that $\pi_0 \leq_\pi \sigma$. This yields a contradiction: since $\pi_0 \notin \Effect$ but all $\sigma_0 \leq_\pi \sigma$ satisfy $\sigma_0 \in \Effect$.  \\
     Note that this scenario directly corresponds to nondeterminism on the actual trace.\qed
\end{proof}

\begin{lemma}
    Causes are (semantically) unique: There can be no two sets $\Cause \neq \Cause'$ that are both causes for some effect property $\Effect$ on a trace $\pi$ in some system $\mathcal{T}$.
\end{lemma}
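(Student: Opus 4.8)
The plan is to show that every cause for $\Effect$ on $\pi$ in $\mathcal{T}$ coincides with the canonical set $\mathsf{D}$ of Section~\ref{subsec:characterization}; semantic uniqueness is then immediate. So suppose $\Cause$ is an arbitrary cause. Since $\Cause$ satisfies SAT, the contrapositive of Lemma~\ref{lem:nocause} gives that $\mathsf{D}$ is non-empty, and then Lemma~\ref{lem:iscause} tells us that $\mathsf{D}$ is itself a cause; in particular $\mathsf{D}$ satisfies SAT, CF, and MIN.

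First I would prove the inclusion $\mathsf{D} \subseteq \Cause$, equivalently $\overline{\Cause} \subseteq \overline{\mathsf{D}}$. Pick any $\pi_0 \in \overline{\Cause}$. By CF for $\Cause$ there is a trace $\pi_1 \in \overline{\Cause}$ with $\pi_1 \leq_\pi \pi_0$ and a system trace $\pi_2 \in \traces(\mathcal{T})$ with $\pi_2 =_I \pi_1$ and $\pi_2 \in \overline{\Effect}$. Since $\leq_\pi$ compares only inputs, $\pi_2 =_I \pi_1$ together with $\pi_1 \leq_\pi \pi_0$ yields $\pi_2 \leq_\pi \pi_0$, so $\pi_2$ is exactly the witness required by the defining condition of $\overline{\mathsf{D}}$; hence $\pi_0 \in \overline{\mathsf{D}}$. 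This establishes $\overline{\Cause} \subseteq \overline{\mathsf{D}}$, that is, $\mathsf{D} \subseteq \Cause$.

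Next I would use minimality to sharpen this to $\mathsf{D} = \Cause$. We have just shown $\mathsf{D} \subseteq \Cause$, and $\mathsf{D}$ satisfies SAT and CF. If the inclusion were strict, $\mathsf{D}$ would be a proper subset of $\Cause$ satisfying SAT and CF, contradicting MIN for $\Cause$; hence $\mathsf{D} = \Cause$. Since $\Cause$ was an arbitrary cause, every cause equals $\mathsf{D}$, so any two causes $\Cause$ and $\Cause'$ must both equal $\mathsf{D}$ and therefore each other, contradicting $\Cause \neq \Cause'$.

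I do not expect a real obstacle here. The only point that needs a word of care is the deduction $\pi_2 \leq_\pi \pi_0$ from $\pi_2 =_I \pi_1$ and $\pi_1 \leq_\pi \pi_0$, which hinges on the fact (fixed in Section~\ref{subsec:distance}) that the similarity relation is defined over inputs only, so that a full system trace and the input sequence it extends are interchangeable as arguments of $\leq_\pi$. The degenerate case $\overline{\Cause} = \emptyset$ (that is, $\Cause = (2^I)^\omega$) needs no separate treatment: the inclusion $\mathsf{D} \subseteq \Cause$ holds trivially and the MIN step still forces $\mathsf{D} = \Cause$.
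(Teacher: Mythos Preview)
Your argument is correct, but it takes a different route from the paper. The paper proves uniqueness directly and self-containedly: assuming two distinct causes $\Cause \neq \Cause'$, it shows that the intersection $\Cause \cap \Cause'$ again satisfies SAT and CF, which violates MIN for whichever of $\Cause,\Cause'$ properly contains it. No reference to $\mathsf{D}$ or to Lemmas~\ref{lem:iscause} and~\ref{lem:nocause} is needed. Your approach instead leverages those earlier lemmas to show that \emph{every} cause coincides with the canonical set $\mathsf{D}$, and uniqueness falls out. Both are sound; the paper's proof keeps Lemma~\ref{prop:uniqueness} independent of the surrounding machinery, whereas your proof folds the characterization ``the cause is $\mathsf{D}$'' directly into the uniqueness argument, which is arguably what the section is driving at anyway. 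The one delicate step you flag---deriving $\pi_2 \leq_\pi \pi_0$ from $\pi_2 =_I \pi_1$ and $\pi_1 \leq_\pi \pi_0$---is exactly the move the paper itself uses in the MIN part of the proof of Lemma~\ref{lem:iscause}, so it is on solid ground.
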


\begin{proof}
    By contradiction. Assume there were two distinct causes $\Cause \neq \Cause'$ for some effect property $\Effect$ on a trace $\pi$ in some system $\mathcal{T}$. We show that the intersection $\Cause \cap \Cause'$ also satisfies SAT and CF, which contradicts $\Cause$ and  $\Cause'$ satisfying MIN.
    
    SAT: From $\Cause$ satisfying SAT we know for all $\pi_0 =_I \pi$ that $\pi_0 \in \Cause$ and $\pi_0 \in \Effect$, from $\Cause'$ satisfying SAT we further know $\pi_0 \in \Cause'$, hence we have $\pi_0 \in \Cause \cap \Cause'$, such that $\Cause \cap \Cause'$ satisfies SAT.

    CF: Consider some $\pi_0 \in \overline{\Cause \cap \Cause'}$, so either $\pi_0 \in \overline{\Cause}$ or $\pi_0 \in \overline{\Cause'}$. If $\pi_0 \in \overline{\Cause}$, we can pick $\pi_1$ and $\pi_2$ as in the CF condition of $\Cause$, because $\pi_1 \in \overline{\Cause}$ implies $\pi_1 \in \overline{\Cause \cap \Cause'}$. In the other case, we can pick $\pi_1$ and $\pi_2$ symmetrically as in the CF condition of $\Cause'$ to again obtain a $\pi_1 \in \overline{\Cause \cap \Cause'}$. In both cases, we have $\pi_1 =_I \pi_2$ and $\pi_2 \notin \Effect$, which shows that $\Cause \cap \Cause'$ satisfies CF. \qed
\end{proof}

\setcounter{proposition}{2}

\begin{proposition}
    If the effect $\mathsf{E}$ and the similarity relation $\leq$ are given as NBAs $\mathcal{A}_\mathsf{E}$ and $\mathcal{A}_{\leq}$, respectively, then the size of $\mathcal{A}_\mathsf{D}$ is in $|\pi| \cdot 2^{\widetilde{\mathcal{O}}(2^{\widetilde{\mathcal{O}}(|\mathcal{A}_{\Effect}|)} \cdot |\mathcal{A}_{\leq}| \cdot |\mathcal{T}|)}$.
\end{proposition}

\begin{proof}
    By construction, the size of $\mathcal{A}_\times$ is in $2^{\widetilde{\mathcal{O}}(|\mathcal{A}_\Effect|)} \cdot |\mathcal{A_\leq}| \cdot |\mathcal{T}|$, since NBA complementation of an automaton of size $n$ leads to at most $2^{\widetilde{\mathcal{O}}(n)}$ many states~\cite{Schewe09}. With the same argument, we can deduce that the size of the resulting automaton $\overline{\mathcal{A}_\times}$ is in $2^{\widetilde{\mathcal{O}}(2^{\widetilde{\mathcal{O}}(|\mathcal{A}_{\Effect}|)} \cdot |\mathcal{A}_{\leq}| \cdot |\mathcal{T}|)}$, which is equivalent to the claim.\qed
\end{proof}

\end{document}